\documentclass[a4paper,onecolumn,11pt]{quantumarticle}
\pdfoutput=1
\usepackage[T1]{fontenc}
\usepackage[utf8]{inputenc}
\usepackage{amsmath,amssymb,amsthm,mathtools}
\usepackage[numbers,sort&compress]{natbib}
\usepackage{booktabs,array}
\usepackage{needspace}
\usepackage{graphicx,xcolor}
\usepackage{hyperref}
\definecolor{refcolor}{rgb}{0.067,0.42,0.42}
\hypersetup{colorlinks=true,allcolors=refcolor,
	pdftitle={Mid-circuit measurements and feedforward implement every unitary exactly in constant depth},
	pdfauthor={Chenfeng Cao}}
\newcommand{\C}{\mathbb C}
\newcommand{\R}{\mathbb R}
\newcommand{\Z}{\mathbb Z}
\newcommand{\ket}[1]{\lvert #1\rangle}
\newcommand{\bra}[1]{\langle #1\rvert}
\newcommand{\reg}[1]{\mathsf{#1}}
\newcommand{\Us}{\widetilde U}
\newcommand{\opnorm}[1]{\lVert #1\rVert_{\mathrm{op}}}
\newcommand{\fnorm}[1]{\lVert #1\rVert_{\mathrm F}}
\newcommand{\T}{\mathsf T}
\newcommand{\Fou}{\mathcal F}
\newtheorem{theorem}{Theorem}
\newtheorem{lemma}[theorem]{Lemma}
\newtheorem{corollary}[theorem]{Corollary}
\newtheorem{proposition}[theorem]{Proposition}
\allowdisplaybreaks[1]

\makeatletter
\patchcmd{\@maketitle}{\hyphenpenalty=50000}%
{\raggedright\hyphenpenalty=50000}{}%
{\PackageWarning{main}{title patch failed}}
\makeatother

\begin{document}
	\title{Mid-circuit measurements and feedforward implement every unitary exactly in constant depth}
	\author{Chenfeng Cao}
	\email{caocf@hku.hk}
	\affiliation{HK Institute of Quantum Science \& Technology, The University of Hong Kong, Hong Kong, China}

	\date{October 2026}
	
	\begin{abstract}
		Mid-circuit measurements with feedforward reduce the optimal worst-case quantum depth for exact $n$-qubit unitary synthesis from $\Theta(n)$ to $O(1)$, using arbitrary one- and two-qubit gates and unrestricted ancillary space. Measurements and resets without feedforward remain subject to the circuit causal-cone bound. Our deterministic construction uses single-qubit measurements, $O(1)$ rounds of parity feedforward, and $O(n4^n\log(n+2))$ total qubits, with all ancillas starting and ending in $\ket0$. Classical processing is not counted in quantum depth. We teleport only the matrix error of an approximation by Nehoran and Yuen. An exactly prepared ancillary state encodes this error, whose action is added back by coherent interference. At operator-norm error $2^{-n}$, one step of amplitude amplification suffices. Coherent fanout circuits achieve the same constant-depth bound, while circuits using only one- and two-qubit gates attain depth $\Theta(n)$ with $O(4^n\log^2(n+2))$ gates. Generic targets require $\Omega(4^n/(h+1))$ qubits at depth $h$, even under unrestricted classical control. Our depth--width trade-offs meet this bound within a factor $O(n^2)$ for $1\le h\le2^n$.
	\end{abstract}
	\maketitle
	
	\section{Introduction}
	
	Mid-circuit measurements with feedforward allow entangled ancillas
	and classical corrections to replace sequential quantum operations.
	Teleportation and measurement-based computation provide familiar examples
	\cite{bennett1993,gottesman1999,raussendorf2001,raussendorf2003}.
	Measurements with feedforward also prepare long-range entangled states in shallow
	circuits~\cite{piroli2021,lu2022,tantivasadakarn2023} and underlie
	dynamic-circuit experiments~\cite{baumer2024}. Measurements also reduce
	the depth of commuting circuits~\cite{cao2026} and provide non-Gaussian
	resources for fermionic sampling~\cite{caotang2026}.
	A generic $n$-qubit unitary requires exponentially many
	gates~\cite{shende2006}, but ancillary qubits allow many of them
	to run in parallel, as quantified by depth--ancilla
	trade-offs~\cite{sun2023,zhangli2022,yuanzhang2023}. We ask whether
	feedforward of measurement outcomes can reduce exact
	synthesis of an arbitrary unitary on an unknown input to constant
	quantum depth, given unrestricted ancillary space.
	
	We work first with coherent circuits that allow fanout (a CNOT with
	many targets), then replace
	each fanout gate by a constant-depth gadget using measurements and
	parity feedforward~\cite{browne2011,buhrman2024,zi2025}.
	Exact constant-depth state preparation is known for circuits with
	generalized Toffoli gates, with or without fanout
	\cite{rosenthal2026,ggj2026,ssw2026}, and for circuits with measurements and
	feedforward~\cite{zi2025}.
	Unitary synthesis must transform every unknown input superposition
	coherently.
	For arbitrary unitaries,
	Rosenthal obtained exact depth $\widetilde O(2^{n/2})$ with one- and
	two-qubit gates and $O(2^{n/2})$ with fanout and generalized Toffoli
	gates~\cite{rosenthal2026}, where $\widetilde O$ suppresses logarithmic
	factors. Strelchuk, Subramanian, and Weisz gave
	exact constant-depth constructions for permutations and diagonal
	unitaries, leaving the general case open~\cite{ssw2026}. Nehoran and
	Yuen showed that every unitary can be approximated at constant
	depth with fanout~\cite{nehoran2026}. Replacing fanout by measurements
	and feedforward gives the corresponding approximate adaptive
	implementation. For $d=2^n$, their width is
	$\widetilde O(d^2\log^4(d/\varepsilon))$, so accuracy
	$\varepsilon=1/d$ still needs only $\widetilde O(d^2)$ qubits
	\cite[Theorem~5.1]{nehoran2026}. Here we give exact implementations at
	finite width, allowing arbitrary continuous gate parameters as in
	unitary synthesis~\cite{barenco1995}.
	
	The \emph{block} of a circuit is the operator obtained by fixing all
	ancilla inputs and outputs to $\ket0$.
	Gate teleportation through the Choi state of $U$ requires a correction
	$UP^\dagger U^\dagger$ for a Bell outcome labeled by the Pauli operator
	$P$~\cite{gottesman1999}. For a general $U$, this correction need not
	be local. Coherently retaining the identity-outcome component gives
	the block $U/d$~\cite[Sec.~7.2]{irani2022}, so standard amplitude
	amplification takes $\Theta(d)$ rounds~\cite{brassard2002,berry2014}.
	We teleport only the residual
	of an approximation. If its block $\Us$ satisfies
	$\opnorm{U-\Us}\le1/d$, the residual $E=U-\Us$
	satisfies
	\begin{equation}
		\nu=\sqrt d\,\fnorm E\le d\,\opnorm E\le1.
		\label{eq:key-intro}
	\end{equation}
	For $E\ne0$, teleportation yields the block $E/\nu$. Combining
	$\Us$ and $E/\nu$ coherently with relative coefficients $1$ and
	$\nu$ produces $U/(1+\nu)$.
	The repaired block is exactly proportional to $U$, with normalization
	$1+\nu\le2$ and the same success amplitude for every input.
	A flag sets this amplitude to $1/2$, so one amplification step
	succeeds on every input superposition (Fig.~\ref{fig:repair}).
	The residual-state preparation is independent of the input. Its
	cost accounts for the leading width in our construction. We call this procedure
	\emph{residual repair}.
	
	Without feedforward, an output depends only on operations in its
	backward circuit causal cone. The cone grows at most exponentially
	with depth, limiting how many parameters can influence the output.
	Feedforward supplies additional classical control paths between
	quantum layers, while parameter counting still limits the total
	circuit size.
	With unrestricted ancillary space, the optimal worst-case exact
	depth is $\Theta(n)$ without feedforward, even when measurements and
	resets are allowed. Parity feedforward reduces it to $O(1)$.
	The ordinary-circuit upper bound answers Rosenthal's Question~1.2
	and meets the joint size and depth goals of Question~1.8
	\cite{rosenthal2026}.
	Together, these results settle the
	exact versions of the ordinary and fanout questions in Open
	Problem~17~\cite{rosenthalthesis}. Combined with the fanout elimination
	of Ref.~\cite{ssw2026}, they also resolve the unrestricted-size
	question of Strelchuk, Subramanian, and Weisz.
	Applying the construction to smaller unitaries in a cosine--sine
	decomposition gives depth--width trade-offs.
	
	\begin{figure*}[t]
		\centering
		\includegraphics[width=\textwidth]{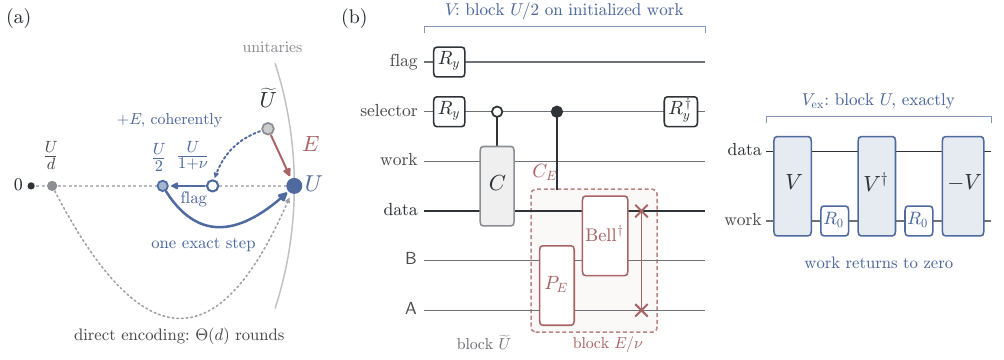}
		\caption{Residual repair. (a)~Schematic picture of the blocks. Direct Choi encoding
			(dotted route) requires $\Theta(d)$ amplification rounds. Adding the
			residual instead produces $U/(1+\nu)$, which a flag rescales to
			$U/2$ before one exact amplification step. (b)~The circuit $V$ and
			the amplifier. The selector chooses
			between the approximate circuit $C$ (open control) and the residual
			circuit $C_E$ (closed control). Inside $C_E$, the preparation circuit satisfies
			$P_E\ket0=\ket e_{\reg{AB}}\ket0_{\rm prep}$; preparation work is
			omitted from the drawing. The circuit $C_E$ then applies the
			inverse Bell transform to $\reg B$ and the data, and swaps $\reg A$
			into the data position. All registers except the data start in
			$\ket0$. The operator $R_0=I-2\Pi$ in
			Eq.~\eqref{eq:amplification} flips the phase of the joint $\ket0$
			state of all non-data qubits, including preparation work.}
		\label{fig:repair}
	\end{figure*}
	
	\section{Model and main result}
	\label{sec:model}
	
	We consider $n$ input qubits and ancillas (work qubits) initialized in
	$\ket0$. We allow arbitrary one- and two-qubit gates with all-to-all
	connectivity and continuous parameters specified exactly.
	A layer consists of operations on disjoint
	supports. The \emph{quantum depth} $h$ counts gate and single-qubit
	measurement layers, including resource preparation, corrections, and
	resetting; the \emph{width} $w$ counts all qubits, including the input.
	Classical processing of measurement outcomes is not charged to quantum
	depth. Our adaptive constructions use a fixed
	gate and measurement schedule; outcomes determine only Pauli
	corrections and resets through affine parities, called \emph{parity
		feedforward} below. Our measurements are in the computational basis;
	the lower bounds allow arbitrary single-qubit measurement bases.
	Without feedforward, all operations and the choice of output qubits
	are independent of the measurement record; unconditional resets are
	still allowed. The general width lower bound also allows outcome-dependent
	schedules and arbitrary classical processing. Synthesis is nonuniform:
	the target and gate parameters are specified classically, with no
	bound on classical compilation time.
	
	An adaptive implementation of a unitary is exact and deterministic
	if, conditioned on any measurement record of nonzero probability,
	it implements the target unitary channel and resets all work
	qubits to $\ket0$. A coherent circuit is \emph{clean} if it maps
	$\ket\psi\ket0$ to $U\ket\psi\ket0$ for every $\ket\psi$. By linearity,
	both notions cover inputs entangled with a reference. Besides one- and
	two-qubit gates, coherent circuits may use the fanout gate
	\begin{equation}
		F_q:\ket{x,y_1,\ldots,y_q}
		\longmapsto\ket{x,y_1\oplus x,\ldots,y_q\oplus x}
		\label{eq:fanout}
	\end{equation}
	and generalized Toffoli gates with arbitrarily many controls and one
	target. Coherent circuits using only one- and two-qubit gates are called
	\emph{ordinary}; allowing fanout gives \emph{fanout circuits}
	\cite{green2002,hoyer2005}.
	We build coherent circuits with generalized Toffoli gates and then
	compile them into fanout or adaptive circuits. We write $d=2^n$ and use
	$\opnorm{\cdot}$ and $\fnorm{\cdot}$ for the operator and Frobenius
	norms.
	
	\Needspace{14\baselineskip}
	\begin{theorem}[Exact constant-depth synthesis]
		\label{thm:main}
		Every $n$-qubit unitary has an exact deterministic implementation of
		quantum depth $O(1)$ using one- and two-qubit gates,
		single-qubit measurements, $O(1)$ rounds of parity feedforward, and $O(nd^2\log(n+2))$
		qubits; all ancillas start and end in $\ket0$. The same width suffices
		for an exact clean constant-depth fanout circuit. Conversely, for Haar-almost every unitary,
		an exact constant-depth implementation in either model requires
		$\Omega(d^2)$ qubits, even if the classical controller is unrestricted.
	\end{theorem}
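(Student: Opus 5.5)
The proof has two parts: an upper bound built by residual repair, and a parameter-counting lower bound.

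\textbf{Upper bound: approximate block.} We work first with coherent circuits that use generalized Toffoli gates, and compile them at the end. We invoke the Nehoran--Yuen construction~\cite[Theorem~5.1]{nehoran2026} at accuracy $\varepsilon=1/d$. This gives a constant-depth fanout circuit $C$ whose block $\Us$ satisfies $\opnorm{U-\Us}\le 1/d$, at width $\widetilde O(d^2)$. We then set $E=U-\Us$. If $E=0$ we are done; otherwise $\nu=\sqrt d\,\fnorm E\le 1$ by Eq.~\eqref{eq:key-intro}.

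\textbf{Upper bound: residual state and interference.} The normalized residual state is $\ket e=(E\otimes I)\ket{\Phi}/\fnorm{E}$ on $\reg{AB}$, with $\ket\Phi=\sum_x\ket{x}\ket{x}$. This is a fixed $2n$-qubit state independent of the input. We prepare it exactly in constant depth using a known exact constant-depth state-preparation result~\cite{rosenthal2026,ggj2026,ssw2026,zi2025}. The width of that result must be $O(nd^2\log(n+2))$ for a $d^2$-dimensional state, and this is the dominant cost.

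The circuit $C_E$ applies the inverse Bell transform to $\reg B$ and the data, then swaps $\reg A$ into the data position. A short calculation should show that its block is $E/\nu$. We place a selector qubit in the state $(\ket0+\sqrt\nu\ket1)/\sqrt{1+\nu}$, control $C$ on $\ket0$ and $C_E$ on $\ket1$, and unprepare the selector. Controlled versions of constant-depth Toffoli/fanout circuits stay constant depth by attaching the control to every gate via fanout copies. The resulting block is $(\Us+E)/(1+\nu)=U/(1+\nu)$.

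\textbf{Upper bound: amplification and compilation.} A flag rotation scales the block to exactly $U/2$. One round of amplitude amplification, $V R_0 V^\dagger R_\psi$-style with $\sin(3\theta)=1$ at $\sin\theta=1/2$, then yields a clean implementation of $U$ on every input. Here the reflection $R_0$ about all-zero ancillas is a generalized Toffoli conjugated by $X$ gates. This gives the clean fanout circuit.

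Next we replace each generalized Toffoli by a constant-depth fanout circuit. The standard route is to compute parities and OR via exact constant-depth constructions~\cite{hoyer2005,ssw2026}, costing $O(\log n)$-type width overhead per gate. Finally, each fanout is replaced by the measurement-plus-parity-feedforward gadget~\cite{browne2011,buhrman2024,zi2025}. Since the layers are constant, only $O(1)$ feedforward rounds are needed. The gadget's Pauli corrections propagate through later layers, so we defer them into one affine-parity correction per layer.

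\textbf{Lower bound.} I would use a dimension count over the whole measurement tree. Fix depth $h$ and width $w$. Any recorded branch implementing $U$ is a product of at most $O(wh)$ gate layers with continuous parameters, giving $O(wh)$ real parameters; discrete branches and classical choices contribute only countably many pieces. The implemented $U$ is a polynomial/analytic image of this parameter space. If $wh<d^2-1=\dim SU(d)$, the set of implementable unitaries is a countable union of images of lower-dimensional manifolds. That set is Haar-null, since a countable union of measure-zero sets has measure zero. So $w=\Omega(d^2)$ for $h=O(1)$.

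The subtle step is that a nonzero-probability branch must itself determine $U$ through a smooth map of the parameters. I would handle this by noting that the postselected branch operator is proportional to $U$, and that normalization is a smooth function on an open set.

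\textbf{Main obstacle.} I expect the hardest step to be keeping the exact residual-state preparation and the controlled Nehoran--Yuen circuit within width $O(nd^2\log(n+2))$. I would first try to use uniform-controlled rotations with a Toffoli-based multiplexor. That approach has $O(d^2)$ controlled blocks, each with $O(n\log n)$ overhead.
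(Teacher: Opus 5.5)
Your architecture is the paper's: residual repair with a selector, flag rescaling to $U/2$, one exact amplification step, controlled circuits built from fanout copies, a final compilation of generalized Toffolis and fanouts, and the single-branch parameter count for the converse. With Rosenthal's state preparation plugged in, it yields exact constant-depth circuits on $d^2\,\mathrm{poly}(n)$ qubits. What it does not establish is the stated width $O(nd^2\log(n+2))$.

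\textbf{The approximate circuit.} Nehoran--Yuen used as a black box only guarantees width $\widetilde O(d^2\log^4(d/\varepsilon))$. At $\varepsilon=1/d$ this is $d^2$ times a polynomial in $n$ well above $n\log(n+2)$. The paper notes that their phase layers alone have width $O(m^2K^2q\log q)$ with $m=4d$ and $K$ of order $n$. So in your plan the approximate circuit, not the residual state, would dominate and overshoot. The paper needs its own Lemma~\ref{lem:seed} here, which has two parts:
\begin{itemize}
\item A finite-grid analysis. The Fourier layers are exact by Poisson summation and only the three phase layers alias, so $K=\Theta(\log(d/\varepsilon))$ grid points per mode suffice, with $q=\log_2K=O(\log(n+2))$.
\item A bitwise implementation of each quadratic phase. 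Expanding $\delta^2k^{\T}Ak/2$ in two's complement over the $mq$ register bits gives commuting one- and two-qubit phase gates. These are applied in parallel on fanout copies, for width $O(m^2q^2)=O(d^2\log^2(n+2))$.
\end{itemize}

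\textbf{The residual-state preparation.} This is assumed rather than constructed. The cited results do not give the bound you need. Ref.~\cite[Theorem~4.2]{ssw2026} uses $O(4^q)=O(d^4)$ workspace for $q=2n$, and Rosenthal's bound is quoted only as $\widetilde O(2^q)$. Your fallback fails on depth. Uniformly controlled rotations form a cascade of $2n$ multiplexed rotations, each controlled by qubits set by the earlier ones, so even with constant-depth multiplexors the depth is $\Theta(n)$.

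The paper's Lemma~\ref{lem:stateprep} instead uses the first-one (unary) encoding:
\begin{itemize}
\item a product of $N=2^q$ single-qubit rotations;
\item the prefix predicates $g_k=[x_0=\cdots=x_{k-1}=0]$, computed cleanly in constant depth on $O(qN)$ qubits by sharing dyadic block-zero indicators;
\item controlled inverse rotations that clear every qubit after the first one;
\item phases, followed by one-hot-to-binary conversion.
\end{itemize}
This gives $O(nd^2)$ qubits with generalized Toffolis.

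\textbf{Toffoli compilation.} The remaining factor $\log(n+2)$ requires the $O(k\log\log(k+3))$ compilation of Lemma~\ref{lem:toffoli}, applied once to the whole width-$O(nd^2)$ circuit. An $O(k\log k)$ compilation of the all-work reflection, which has $k=\Theta(nd^2)$ controls, would cost an extra factor of order $n$.

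\textbf{Feedforward.} Pauli corrections cannot be commuted past later layers of arbitrary local gates. They must be applied right after each fanout layer, as in the paper. This still gives only $O(1)$ feedforward rounds.
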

	Sections~\ref{sec:repair} and~\ref{sec:synthesis} give the construction;
	Proposition~\ref{prop:width} gives the lower bound.
	With the same local gates and product-state ancillas, measurements
	without feedforward still require depth $\Omega(n)$ for generic
	targets, as shown in Theorem~\ref{thm:ordinary}.
	
	For a chain of $N$ qubits, every nearest-neighbor circuit of depth
	$O(\log N)$ has an exact constant-depth implementation in the all-to-all
	model with parity
	feedforward and $\operatorname{poly}(N)$ qubits
	(Corollary~\ref{cor:log}). Its gates can be grouped into two layers
	of disjoint $O(\log N)$-qubit blocks, to which Theorem~\ref{thm:main}
	applies. This covers $O(\log N)$ periods of a one-dimensional Floquet
	circuit whose period has a constant number of nearest-neighbor
	gate layers.
	
	\section{Residual repair}
	\label{sec:repair}
	
	Throughout this section, circuits use one- and two-qubit gates,
	fanout, and generalized Toffoli gates. For a circuit $C$ on $n$ data
	qubits and work registers initialized in $\ket0$, its \emph{block} is
	\begin{equation}
		\Us=(I\otimes\bra0)\,C\,(I\otimes\ket0),
		\label{eq:block}
	\end{equation}
	a $d\times d$ matrix, not necessarily unitary. The circuit $C$ is
	a block encoding of $\Us$: measuring its work qubits and obtaining
	all zeros would apply $\Us$ to the data before normalization.
	The repair keeps this component coherent.
	Circuits are padded to common work registers, and
	$\Pi=I\otimes\ket0\!\bra0$ denotes the projector onto initialized
	work qubits. If an approximation is specified up to global phase,
	we choose the representative of $U$ closest to $\Us$. A known
	one-qubit phase restores any prescribed matrix representative
	before a controlled version is used.
	
	A circuit of depth $h$ and width $w$ in this gate set can be inverted,
	or controlled by one qubit, in depth $O(h+1)$ and width $O(w)$ with
	clean additional work qubits. One fans the control out to a separate copy for each gate
	of a layer; a controlled generalized Toffoli gate gains one control,
	and a controlled fanout computes, fans out, and uncomputes the
	conjunction of its source and the control~\cite{barenco1995,
		rosenthal2026}. The reflection $I-2\Pi$ is a generalized Toffoli gate
	conjugated by single-qubit gates, so it too has constant depth.
	
	\Needspace{15\baselineskip}
	\begin{theorem}[Residual repair]
		\label{thm:repair}
		Let $U\in\mathrm U(d)$, and let $C$ be a coherent circuit of one-
		and two-qubit gates, fanout, and generalized Toffoli gates, of depth
		$h$ and width $w$, whose block $\Us$ satisfies
		\begin{equation}
			\nu=\sqrt d\,\fnorm{U-\Us}\le1,
			\label{eq:tolerance}
		\end{equation}
		which holds in particular when $\opnorm{U-\Us}\le1/d$. When
		$E=U-\Us\neq0$, suppose the $2n$-qubit state proportional to
		$\sum_{i,j}E_{ij}\ket i\ket j$ has an exact clean
		preparation in the same gate set of depth $h_E$ and width $w_E$. Then
		$U$ has an exact clean circuit of depth $O(h+h_E+1)$ and width
		$O(w+w_E)$ in the same gate set.
	\end{theorem}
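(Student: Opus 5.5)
The plan is to build a block encoding of $U/2$ and then apply a single step of exact amplitude amplification. The case $E=0$ is trivial, so assume $E\ne0$ and set $\ket e=\nu^{-1}\sqrt d\sum_{i,j}E_{ij}\ket i\ket j$ on registers $\reg{AB}$. Then $\fnorm{\sqrt d\,E}=\nu$, so $\ket e$ is normalized, and by hypothesis it has a clean preparation $P_E$.

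\textbf{The residual circuit $C_E$.} First I construct $C_E$, whose block is $E/\nu$. Apply $P_E$, then the inverse Bell transform to $\reg B$ and the data, then swap $\reg A$ into the data position. Projecting the Bell-measurement register onto $\ket0$ amounts to projecting onto $\frac1{\sqrt d}\sum_k\ket k_{\reg B}\ket k_{\rm data}$. Tracking the indices gives a block $\frac{1}{\sqrt d}\cdot\frac{\sqrt d}{\nu}E=E/\nu$. This is the standard identity-outcome gate-teleportation computation; I only need to keep transposes and conjugations straight, choosing whether to encode $E$ or $E^{\T}$ in $\ket e$ so that the output is $E$ itself. The circuit $C_E$ has depth $O(h_E+1)$ and width $O(w_E+n)$.

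\textbf{Combining $C$ and $C_E$, and adding the flag.} Next I combine the two blocks by a linear combination of unitaries. A selector qubit is prepared by a rotation $R$ with $R\ket0=(\ket0+\sqrt\nu\ket1)/\sqrt{1+\nu}$. Controlled on the selector, the circuit applies $C$ on $\ket0$ and $C_E$ on $\ket1$, and then applies $R^\dagger$. Counting the selector as a work qubit, the block is
\[
\frac{1}{1+\nu}\bigl(\Us+\nu\cdot E/\nu\bigr)=\frac{U}{1+\nu}.
\]
The controlled versions cost depth $O(h+h_E+1)$, by the remark preceding the theorem, and both circuits are padded to common work registers. A flag qubit rotated by $\cos\theta=(1+\nu)/2\in[1/2,1]$ then rescales the block to exactly $U/2$; this is possible because $\nu\le1$. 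Call the resulting circuit $V$.

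\textbf{One amplification step.} Finally, I form $V(I-2\Pi)V^\dagger(I-2\Pi)V$ with the appropriate signs. The block of $V$ is $\sin\phi\cdot U$ with $\sin\phi=1/2$, so $\phi=\pi/6$ and one step rotates by $3\phi=\pi/2$. Because the block is $U$ times a scalar, the standard two-dimensional invariant-subspace argument applies for every input $\ket\psi$ simultaneously: $V\ket\psi\ket0=\frac12 U\ket\psi\ket0+\frac{\sqrt3}{2}\ket{\perp_\psi}$. The "good" and "bad" vectors depend linearly on $\psi$, and the amplitude is independent of $\psi$. The output is therefore exactly $\pm U\ket\psi\ket0$, with a global sign fixed by a phase gate, so the circuit is clean. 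Inverses and reflections cost constant overhead by the preceding remark, which gives total depth $O(h+h_E+1)$ and width $O(w+w_E)$.

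\textbf{Main obstacle.} I expect the main obstacle to be verifying the uniform amplification claim carefully. One must check that $\Pi V\ket\psi\ket0=\frac12U\ket\psi\ket0$ exactly for all $\psi$. One must also check that the cross terms $\langle\perp_\psi|\perp_{\psi'}\rangle$ behave correctly, which follows from $V$ being unitary and the block being proportional to a unitary. The remaining items are routine bookkeeping: getting the teleportation index conventions right, fixing the phase representative of $U$, and confirming that the case $\opnorm{U-\Us}\le1/d$ implies $\nu\le1$ via $\fnorm E\le\sqrt d\,\opnorm E$.
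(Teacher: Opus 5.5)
Your proposal is correct and follows the paper's proof step for step: the same teleported residual block $E/\nu$, the same selector with amplitudes $1$ and $\sqrt\nu$ giving $U/(1+\nu)$, the same flag rescaling to $U/2$, and one exact amplification step. The only difference is presentational: you verify the last step with the two-dimensional invariant-subspace picture of oblivious amplitude amplification, including the cross-term check you flag, whereas the paper expands $-V(I-2\Pi)V^\dagger(I-2\Pi)V\Pi$ algebraically using $\Pi V\Pi=\tfrac12U\otimes\ket0\!\bra0$ and $\Pi V^\dagger\Pi V\Pi=\Pi/4$.
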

	
	\begin{proof}
		Put $E=U-\Us$. If $E=0$, the projected output $\Pi C\ket\psi\ket0
		=U\ket\psi\ket0$ already has norm $\|\psi\|$, so $C$ itself is exact
		and clean. Assume $E\neq0$.
		
		\emph{Teleporting the residual.} On two $n$-qubit registers $\reg A$
		and $\reg B$, prepare
		\begin{equation}
			\ket e_{\reg{AB}}=\frac1{\fnorm E}\sum_{i,j=0}^{d-1}
			E_{ij}\ket i_{\reg A}\ket j_{\reg B}.
			\label{eq:residual-state}
		\end{equation}
		Apply to the data and $\reg B$ the inverse Bell transform, a layer of
		CNOTs followed by Hadamards that maps $d^{-1/2}\sum_j\ket j\ket j$ to
		$\ket0\ket0$, and swap the data with $\reg A$. Call the whole circuit,
		including the state preparation, $C_E$. Its block acts as
		\begin{equation}
			\begin{aligned}
				(I\otimes\bra0)\,C_E\,(I\otimes\ket0)\ket\psi
				&=\frac{1}{\sqrt d\,\fnorm E}\sum_{i,j}E_{ij}\,\psi_j\ket i\\
				&=\frac{E\ket\psi}{\nu}.
			\end{aligned}
			\label{eq:residual-block}
		\end{equation}
		
		\emph{Adding the residual.} Prepare a selector qubit in
		$(\ket0+\sqrt\nu\ket1)/\sqrt{1+\nu}$, apply $C$ if it is $\ket0$ and
		$C_E$ if it is $\ket1$, and unprepare it. With the selector counted as
		a work qubit, the block becomes
		\begin{equation}
			\begin{aligned}
				\frac{1}{1+\nu}\,\Us+\frac{\nu}{1+\nu}\frac{E}{\nu}
				&=\frac{\Us+E}{1+\nu}\\
				&=\frac{U}{1+\nu}.
			\end{aligned}
			\label{eq:addition}
		\end{equation}
		Unpreparing the selector makes the two contributions interfere in
		the component with all work qubits zero.
		Apply $R_y(\theta)$ to a flag initialized in $\ket0$, with
		$\cos(\theta/2)=(1+\nu)/2$, to rescale the block to $U/2$.
		Count the flag as a work qubit and call the
		resulting circuit $V$.
		
		\emph{One exact amplification step.} The block $U/2$ has success
		amplitude $1/2=\sin(\pi/6)$ for every input. One amplification step
		triples this angle to $\pi/2$, making success certain. Define
		\begin{equation}
			V_{\mathrm{ex}}=-V(I-2\Pi)V^\dagger(I-2\Pi)V.
			\label{eq:amplification}
		\end{equation}
		\Needspace{9\baselineskip}
		Expanding the reflections and using
		$\Pi V\Pi=\tfrac12U\otimes\ket0\!\bra0$ and
		$\Pi V^\dagger\Pi V\Pi=\Pi/4$, we obtain
		\begin{equation}
			\begin{aligned}
				V_{\mathrm{ex}}\Pi
				&=V\Pi+2\Pi V\Pi-4V\Pi V^\dagger\Pi V\Pi\\
				&=V\Pi+U\otimes\ket0\!\bra0-V\Pi\\
				&=U\otimes\ket0\!\bra0.
			\end{aligned}
			\label{eq:cubic}
		\end{equation}
		Thus the circuit implements $U$ and returns every work qubit to
		$\ket0$. Two controlled calls to $C$ and one to $C^\dagger$, and
		likewise for $C_E$, account for the stated resources.
	\end{proof}
	
	Linear combinations of unitaries~\cite{childswiebe2012} and oblivious
	amplitude amplification~\cite{berry2014,berry2015} are standard tools.
	The latter includes the one-step exact case, also used in
	Refs.~\cite[Lemma~6.11]{ssw2026}
	and~\cite[Proposition~2]{vasconcelos2026}.
	Novo and Berry improved a constant-accuracy Hamiltonian simulation
	with a truncated correction followed by amplitude
	amplification~\cite{novoberry2017}.
	Here the entire residual $E$ is encoded exactly as a state, and
	$\sqrt d\,\fnorm E\le1$ makes one amplification step sufficient.
	
	\section{Constant-depth ingredients}
	\label{sec:synthesis}
	
	\subsection{Approximation and state preparation}
	\label{sec:seed}
	
	To apply Theorem~\ref{thm:repair}, we use the approximation of
	Ref.~\cite{nehoran2026} at accuracy $1/d$. The implementation below
	gives the stated width. The input is encoded as a single excitation
	shared among auxiliary oscillator modes. Quadratic phase gates
	shear their phase space, while Fourier transforms exchange position
	and momentum. Three phase layers and two Fourier layers transform
	the single-excitation amplitudes before decoding
	\cite{nehoran2026,fiurasek2003}. Representing each mode on a finite grid gives
	an approximate qubit circuit with the following resource bound.
	
	\begin{lemma}[Approximate synthesis]
		\label{lem:seed}
		For every $U\in\mathrm U(d)$ and $0<\varepsilon<1$, there is a
		constant-depth fanout circuit whose block $\Us$ satisfies
		$\opnorm{U-\Us}\le\varepsilon$. It uses
		$O\bigl(d^2[\log\log(d/\varepsilon)]^2+d\log^4(d/\varepsilon)\bigr)$
		qubits.
	\end{lemma}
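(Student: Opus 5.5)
We follow the oscillator construction of Ref.~\cite{nehoran2026} and convert it to qubits. The resource count is what matters, and it is organized around a single-excitation (unary) encoding into $d$ modes.

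\emph{Continuous-variable circuit.} Encode the input $\ket i$, $i<d$, as a single excitation in the $i$th of $d$ modes. This uses binary-to-unary conversion, which has constant depth with fanout and generalized Toffoli gates. On the single-excitation subspace, a Gaussian passive operation acts as a $d\times d$ unitary on the amplitudes. By the Fiurášek-type decomposition~\cite{fiurasek2003}, as used in Ref.~\cite{nehoran2026}, $U$ factors as three diagonal quadratic-phase layers interleaved with two Fourier layers. Decoding back to binary then gives $U$.

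\emph{Qubit discretization.} Represent each mode on a grid of $m$ points. The truncation and grid spacing must be chosen so that each single-excitation wavefunction, a Gaussian-type profile, has tails and aliasing errors of order $\varepsilon/d$. This requires $\log m=O(\log(d/\varepsilon))$ qubits per mode register.

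A quadratic phase $e^{i a x^2}$ on a binary register $x=\sum_k 2^kx_k$ expands into one- and two-bit phases $e^{ia2^{k+l}x_kx_l}$. These commute, so they can be applied in constant depth once the bits are copied by fanout. Copying $O(\log^2)$ bits per mode gives the $d\log^4$-type term after counting copies for the parallel pairwise phases. The Fourier layers across $d$ modes act on an index register of $n$ qubits, and the mixing phases between the mode index and the grid position need the $d^2$ term. Each of the $d^2$ index pairs contributes a controlled phase on only $O(\log\log(d/\varepsilon))$ significant bits squared, because only the leading $\log\log$-many bits of the product matter at the target precision. This yields $d^2[\log\log(d/\varepsilon)]^2$.

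\emph{Error accounting.} Each layer's deviation from its ideal continuous action, restricted to the relevant subspace, is bounded in operator norm by $\varepsilon/10$. Errors add over the constant number of layers by the triangle inequality, so the block satisfies $\opnorm{U-\Us}\le\varepsilon$. Fanout copies are uncomputed, so the block is as stated.

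\emph{Main obstacle.} The hard step is the discretization error analysis: showing that a finite grid reproduces the shear and Fourier actions to precision $\varepsilon/d$ uniformly over all inputs. Since this is the content of Ref.~\cite[Theorem~5.1]{nehoran2026}, I would import their error bound and re-derive only the improved width count, namely the $\log\log$ bit truncation of the $d^2$ cross phases.
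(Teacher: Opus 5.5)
\textbf{Gap 1: grid size, and where the $\log\log$ and $d^2$ factors come from.} The width count, which is the content of this lemma beyond Nehoran--Yuen, does not go through as you set it up. The $\log\log$ factor is not obtained by truncating products to their leading bits. It arises because each mode needs only $K=O(\log(d/\varepsilon))$ grid \emph{points}. With the self-dual spacing $\delta=\sqrt{2\pi/K}$, Poisson summation gives $\Fou_K^{\otimes m}P=P\Fou$ exactly. The Fourier layers are therefore error-free on periodized states, and only the three phase layers alias, with total error $O(m(m+\sqrt K)e^{-cK})$ for absolute constants, uniformly over unit inputs. Each mode register thus has $q=\log_2K=O(\log\log(d/\varepsilon))$ bits, and all of them enter $e^{i\delta^2k^\T Sk/2}$.

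You instead take $O(\log(d/\varepsilon))$-qubit registers and discard low-order bit products, with no operator-norm bound. The discarded terms $\tfrac{\delta^2}{2}S_{ab}c_ic_{i'}k_{a,i}k_{b,i'}$ accumulate over $\Theta(d^2)$ mode pairs, so the truncation claim is unsupported.

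Your source for the $d^2$ term is also misidentified. There is no mode-index register: after the first shear the state $(\alpha^\T x)e^{-x^\T(I-iS)x/2}$ is a multimode squeezed state, not a single excitation. It lives on the full tensor product of $m$ grid registers, and the phase couples every pair of modes through $S_{ab}k_ak_b$. Expanding each $k_a$ into $q$ bits gives $m^2q^2$ commuting controlled phases. Fanout copies parallelize them at width $O(m^2q^2)=O(d^2[\log\log(d/\varepsilon)]^2)$.

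\textbf{Gap 2: the Fourier layers.} You never say how they are implemented in constant depth, yet they, not bit copying within a mode, are the source of the $d\log^4(d/\varepsilon)$ term. $\Fou_K^{\otimes m}$ acts mode by mode. Each $\Fou_K$ is realized exactly by the Nehoran--Yuen constant-depth fanout circuit on $O(K^4)$ qubits, giving $O(mK^4)$ in total together with the encoder. This cost is polynomial in the number of grid points. Registers of $O(\log(d/\varepsilon))$ qubits would therefore make these layers cost $\mathrm{poly}(d/\varepsilon)$ qubits per mode, so the small grid is essential rather than optional.

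\textbf{Smaller repairs.} The three-shear, two-Fourier identity holds only for a real symmetric traceless involution $S$, not for $U$ on $d$ modes. You must first pass to the realified Hermitian dilation $S_U$ on $m=4d$ coordinates, with two label qubits. The encoder must prepare the sampled profiles $u_0,u_1$ in every mode, exchange them under the one-hot control, and erase the indicator. The paper does the erasure by phase kickback from modular negation, under which $u_0$ and $u_1$ have opposite parity. Finally, ``per-layer error on the relevant subspace'' is not well defined, since the intermediate states leave the single-excitation subspace. The paper instead telescopes along the specific continuous states $f_0,f_2,f_4$ and divides by the encoding prefactor $\|u_1\|\,\|u_0\|^{m-1}\ge1/2$.
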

	
	Writing the quadratic phases in register bits produces commuting
	one- and two-qubit phase gates that fanout allows to run in parallel
	\cite[Theorem~3.2]{hoyer2005}.
	This reduces the phase-layer width of
	Ref.~\cite[Algorithm~4]{nehoran2026}; Fourier transforms and encoding
	contribute the second term. Supplementary Material~\ref{app:seed}
	proves the finite-grid error bound.
	
	\begin{lemma}[Exact state preparation]
		\label{lem:stateprep}
		Every $q$-qubit pure state has an exact clean preparation of constant depth
		with one- and two-qubit gates, fanout, and generalized Toffoli gates on
		$O(q2^q)$ qubits, and with a fanout circuit on
		$O(q^22^q)$ qubits.
	\end{lemma}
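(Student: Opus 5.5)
The plan is to prepare the state by the standard binary-tree (Grover--Rudolph) amplitude-loading scheme, but to carry out all $q$ levels at once instead of sequentially. Write $\ket\phi=\sum_x\phi_x\ket x$ and define, for every prefix $y\in\{0,1\}^{k}$ with $0\le k<q$, the rotation angle $\theta_y$ and relative phases fixed by the marginal amplitudes $\|\phi_{y0\cdot}\|$ and $\|\phi_{y1\cdot}\|$. Sequentially, level $k$ applies to qubit $k+1$ a rotation multiplexed on the first $k$ qubits. Unrolled, this gives depth $q$. Our task is to compress those $q$ levels to constant depth, and we do it with a one-hot (unary) encoding of prefixes.

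\textbf{Step 1 (one-hot tree).} Allocate one ancilla $a_y$ for every node $y$ of the complete binary tree of depth $q$, giving $O(2^q)$ qubits. Prepare the ``W-like'' state $\sum_{x}\phi_x\ket{\mathbf 1_{\text{path}(x)}}$, in which exactly the ancillas on the root-to-leaf path of $x$ are set. The steps are:
\begin{itemize}
\item Write the amplitude of each leaf $x$ as a product of branch factors along its path.
\item Observe that the state restricted to the leaf ancillas is a single-excitation state $\sum_x\phi_x\ket{e_x}$.
\item Prepare this single-excitation state on $2^q$ qubits in constant depth. Apply generalized Toffoli gates and fanout to a register encoding the $q$ rotations, each controlled by conjunctions of its ancestors.
\end{itemize}
The constant depth comes from making the tree transitions for all levels commute, so that they can run in parallel via fanout copies.

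Concretely, I would first prepare the amplitudes in a different basis. Prepare $\sum_x\phi_x\ket{e_x}$ directly by a depth-$O(1)$ circuit built from controlled rotations arranged so that each level acts on disjoint qubits given fanned-out copies of controls. Then convert unary to binary. For each of the $q$ output bits $b$, the value $x_b=\bigoplus_{x:x_b=1}(e_x)$ is a parity of $2^{q-1}$ leaf qubits. That parity is a single fanout-type operation: fanout with the target as source in reverse, or equivalently $2^{q-1}$ CNOTs into one target, parallelized with a parity gadget. Clean uncomputation of the one-hot register from the binary register uses generalized Toffoli gates: leaf $e_x$ is toggled by a Toffoli controlled on $\ket x$, requiring $q$ copies of the binary register per leaf. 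That yields the $O(q2^q)$ width, since there are $2^q$ leaves, each needing its own $q$ copies made by fanout.

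\textbf{Main obstacle: exact constant-depth preparation of the single-excitation state.} I would handle it with the paper's residual philosophy avoided. Load amplitudes multiplicatively along the tree using the unary ancestors: each node rotation is controlled by its parent ancilla only (two-qubit gates). The levels still appear sequential, so instead I exploit that the node rotations commute once written as controlled $R_y$ conditioned on the node's parity pattern. Failing that, I fall back to preparing $\sum_x\phi_x\ket x$ via amplitude loading on a uniform superposition using a diagonal phase plus one exact amplitude-amplification step, as in Theorem~\ref{thm:repair}. The diagonal unitaries are constant depth via Toffolis and fanout.

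\textbf{Fanout-only version.} Replace every generalized Toffoli on $q$ controls by a constant-depth fanout construction costing $O(q)$ extra qubits each (Høyer--Špalek style). This multiplies the width by $q$, giving $O(q^22^q)$.
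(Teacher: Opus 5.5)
Your outer architecture matches the paper's. Writing the target as $\sum_j\beta_je^{i\varphi_j}\ket j$ with $\beta_j\ge0$, both build the one-hot state $\sum_j\beta_je^{i\varphi_j}\ket{e_j}$ on $N=2^q$ qubits. Both then write its binary label into a fresh register by parities and clear the one-hot register by $q$-controlled equality tests on copies of the label, all on $O(qN)$ qubits.

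The gap is the step you yourself call the main obstacle, and it is the real content of the lemma: you never give an exact constant-depth preparation of the single-excitation state. The tree rotations do not commute under any rewriting. A child's rotation is controlled by an ancilla that its parent's rotation writes, so if it is applied before the parent it acts trivially.

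The fallback also fails. A diagonal phase cannot change magnitudes. Even with a multiplexed rotation on a flag, loading onto the uniform superposition gives success amplitude at most $1/(\sqrt N\max_x|\phi_x|)$, which is $2^{-q/2}$ for a computational basis state. One exact amplification step reaches certainty only from amplitude at least $1/2$, and $k$ steps only raise the angle $\theta$ to $(2k+1)\theta$. So this route needs $\Omega(2^{q/2})$ rounds for such states.

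The missing idea is to prepare a \emph{product} state in one layer and then uncompute its superfluous part in parallel. The paper rotates every qubit $j<N$ at once to $\cos\theta_j\ket0+\sin\theta_j\ket1$ with stick-breaking angles $\sin\theta_j=\beta_j/\sqrt{t_j}$, where $t_j=\sum_{k\ge j}\beta_k^2$. Since $\sin\theta_J=1$ at the last nonzero $\beta_J$, the all-zero string has amplitude $0$. Grouping basis strings by the position $j$ of their first $1$ then gives $\sum_j\beta_j\ket{0^j1}\otimes(\text{rotated qubits }k>j)$.

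Next the paper computes all prefix flags $g_k=[x_0=\cdots=x_{k-1}=0]$. In the summand with first one at $j$, $g_k=0$ exactly when $k>j$. Inverse rotations controlled on $g_k=0$, all in one layer, therefore reset every qubit after the first one. Those bits never enter a prefix predicate, so rerunning the flag circuit clears the flags, and single-qubit phases $e^{i\varphi_j}$ finish the one-hot state. The $O(qN)$ width also covers computing all $N-1$ flags at once through dyadic blocks: each position lies in at most $q$ blocks, and each $g_k$ is an AND of at most $q$ block indicators.

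Your tree picture can be repaired in the same way. Rotate an independent qubit at every internal node in one layer. Compute for each node whether all its ancestors' bits point toward it, reset the unreached nodes, and recompute to clear the indicators. This is consistent because a reset never changes any indicator. Afterwards each label bit is the parity of one tree level.
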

	
	Rosenthal established exact clean constant-depth state preparation
	with $\widetilde O(2^q)$ qubits~\cite[Theorem~4.1]{rosenthal2026}.
	Sharing prefix computations reduces the $O(4^q)$ workspace of
	Ref.~\cite[Theorem~4.2]{ssw2026} to $O(q2^q)$ with generalized
	Toffoli gates. Supplementary Material~\ref{app:stateprep} gives the
	construction and compilation.
	
	\begin{proof}[Coherent part of Theorem~\ref{thm:main}]
		
		Take the approximate circuit of Lemma~\ref{lem:seed} with $\varepsilon=1/d$; its width
		is $O(d^2\log^2(n+2))$. The residual state of
		Theorem~\ref{thm:repair} has $2n$ qubits, so Lemma~\ref{lem:stateprep}
		prepares it exactly in constant depth on $O(nd^2)$ qubits with
		generalized Toffoli gates, and Theorem~\ref{thm:repair} yields an exact
		clean constant-depth circuit of width $W=O(nd^2)$. The
		Toffoli compilation of Lemma~\ref{lem:toffoli}
		expands each width-$W$ layer to
		$O(W\log\log(W+3))=O(nd^2\log(n+2))$ qubits, since gates within a
		layer have disjoint supports.
	\end{proof}
	
	\subsection{From coherent synthesis to dynamic circuits}
	\label{sec:adaptive}
	
	Constant-depth fanout with parity feedforward is
	standard~\cite{broadbent2009,browne2011,buhrman2024,zi2025,foxman2026}.
	Related measurement-based fanout protocols have been demonstrated on quantum
	hardware~\cite{baumer2025}.
	Each target couples to its own ancilla in parallel, and the measurement
	outcomes determine the final local Pauli corrections.
	The seven-layer construction in Supplementary Material~\ref{app:fanout}
	uses $2q+1$ ancillas for $q$ targets, including preparation and reset.
	Since fanouts in a layer have disjoint supports, replacing them in a
	coherent circuit of depth $h$ and width $w$ gives
	\begin{equation}
		h_{\rm meas}\le7h,\qquad w_{\rm meas}\le3w,
		\label{eq:measurement-compiler}
	\end{equation}
	with at most $h$ feedforward rounds, proving the adaptive part of
	Theorem~\ref{thm:main}. Controls, adjoints, and amplification are
	completed before substitution, so each gadget replaces an
	uncontrolled fanout and its outcome-dependent phase remains global.
	
	\section{Depth--width trade-offs}
	\label{sec:bounds}
	
	\subsection{Parameter counting}
	
	For adaptive unitary synthesis, parameter counting
	\cite{shende2006,sun2023,yuanzhang2023} can be applied to a single
	nonzero measurement branch.
	
	\begin{proposition}[Generic depth--width lower bound]
		\label{prop:width}
		For Haar-almost every $n$-qubit unitary, every exact implementation of
		depth $h$, coherent with fanout and generalized Toffoli gates or
		adaptive with single-qubit measurements and arbitrary classical
		feedforward, uses $\Omega(d^2/(h+1))$ qubits.
	\end{proposition}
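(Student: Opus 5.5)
Restrict to a single nonzero-probability measurement branch, count its continuous parameters, and compare that count with $\dim \mathrm U(d)=d^2$. Fix the width $w$ and depth $h$. A circuit of depth $h$ on $w$ qubits has at most $hw$ gate slots. Each one- or two-qubit gate carries $O(1)$ real parameters. A fanout or generalized Toffoli gate carries none beyond its discrete placement, and so does a computational-basis measurement; an arbitrary single-qubit measurement basis adds $O(1)$ real parameters. The discrete data (which qubits each gate touches, which gates are fanouts or Toffolis, which qubits are outputs) ranges over a countable set of \emph{circuit skeletons}. For a coherent circuit, the block $(I\otimes\bra0)C(I\otimes\ket0)$ is a polynomial, hence smooth, function of the $O(hw)$ real parameters. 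Exactness means this block equals $U$ up to global phase. So the set of exactly implementable $U$ for a fixed skeleton is the image of a smooth map from a manifold of dimension $O(hw)$ into $\mathrm U(d)$. If $O(hw)<d^2$, this image has Haar measure zero by Sard's theorem (it consists entirely of critical values). A countable union over skeletons, and over all $h,w$ with $hw=o(d^2)$, is still null, so Haar-almost every $U$ needs $hw=\Omega(d^2)$. I will phrase the threshold as $w\ge c\,d^2/(h+1)$ to cover $h=0$.

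For adaptive circuits I intend to fix one measurement record $m$ that has nonzero probability for some input. Exact determinism says that, conditioned on $m$, the channel is $\rho\mapsto U\rho U^\dagger$ and all work qubits return to $\ket0$. Along this branch the whole process is a fixed sequence of gates interleaved with rank-one projections $\ket{b}\!\bra{b}$ onto the observed outcomes. The schedule, bases and corrections may depend on $m$ and on unrestricted classical processing, but once $m$ is fixed they are fixed. So the branch operator is $K_m=(I\otimes\bra0)\,G_h\Pi_h\cdots G_1\Pi_1(I\otimes\ket0)$, where each $G_t$ is a layer of gates and each $\Pi_t$ is a tensor product of single-qubit projectors and identities. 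Requiring the output to be $U\rho U^\dagger$ for every input forces $K_m=c_mU$ with $c_m\ne0$: the Kraus operator of a branch that yields a unitary channel must be proportional to that unitary. Hence $U$ equals the normalized $K_m$, which is a smooth (rational, away from $c_m=0$) function of the $O(hw)$ parameters of this single branch. Resets are just measurement plus a conditional $X$, so they fit into the same form, with outcome bits counted as discrete data. The Sard argument then goes through unchanged, with countably many choices of skeleton together with record.

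I expect the main obstacle to be the step $K_m\propto U$, together with the bookkeeping that makes the parametrization legitimate under unrestricted adaptivity. The issue is that the branch's gate parameters are chosen by the controller and are not a priori a finite-dimensional family. The way around it is to note that, for fixed $U$, the realized branch is a specific circuit, and we only need $U$ to lie in the image of the parametrized family for that skeleton. The adaptivity is therefore irrelevant, because each branch alone is a witness. For the proportionality, I would argue as follows. The map $\rho\mapsto K_m\rho K_m^\dagger/\mathrm{tr}(\cdot)$ equals $U\rho U^\dagger$ on all pure inputs, so $K_m\ket\psi\propto U\ket\psi$ for every $\psi$. Linearity on sums of basis vectors then forces a common constant, and inputs entangled with a reference are covered for the same reason.
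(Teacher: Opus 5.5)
Your argument is essentially the paper's: you fix one nonzero branch, use that its Kraus operator must be $\lambda U$ with $\lambda\neq0$, count $O((h+1)w)$ real parameters for that branch's fixed wiring, and take a countable union of null sets over wirings and $(h,w)$. One small repair is needed: the branch operator is not a map into $\mathrm U(d)$ for all parameter values, so Sard's theorem does not apply as stated; instead, use (as the paper does) that the projective class $[K_m]$ is locally Lipschitz on the open set $K_m\neq0$, so its image has Hausdorff dimension $O((h+1)w)$ and is Haar-null in the $(d^2-1)$-dimensional projective unitary group unless $(h+1)w=\Omega(d^2)$.
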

	
	\begin{proof}
		Refine every measurement and reset into individual Kraus outcomes, and
		contract the final ancillary outputs with computational-basis bras.
		This gives Kraus operators $K_\omega:\C^d\to\C^d$ of the implemented
		channel. A unitary channel has Choi rank one~\cite{choi1975}, so each
		nonzero $K_\omega$ equals $\lambda_\omega U$ for a scalar
		$\lambda_\omega\neq0$. Fix such a branch. Its gates, measurement bases,
		and wiring, including adaptively chosen parameters, are fixed, and at
		width $w$ and depth $h$ they contain $O((h+1)w)$ real parameters: each
		local operation contributes $O(1)$, while fanout and generalized
		Toffoli gates contribute none.
		
		For each wiring, $K_\omega$ is a smooth function of these parameters.
		On the open set $K_\omega\ne0$, the map to its projective class
		$[K_\omega]$ is smooth and locally Lipschitz, hence its image has
		Hausdorff dimension $O((h+1)w)$. The exactly realizable unitary
		classes lie in this image. There are finitely many wirings
		at fixed $h,w$. Haar measure on the projective unitary group is its
		smooth $(d^2-1)$-dimensional volume, so their union is Haar-null
		unless $(h+1)w=\Omega(d^2)$. A countable union over $h,w$
		preserves this conclusion and proves the converse in
		Theorem~\ref{thm:main}. Applying the same argument to a nonzero
		successful branch also covers heralded exact implementations.
	\end{proof}
	
	In an exact deterministic implementation of a unitary, the branch
	probability $|\lambda_\omega|^2$ is independent of the input.
	The measurement record therefore carries no information about the
	unknown state.
	
	\begin{figure}[tb]
		\centering
		\includegraphics[width=0.6\linewidth]{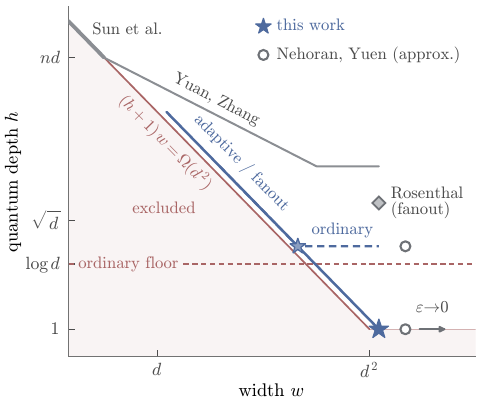}
		\caption{Schematic depth--width landscape for generic unitaries,
			with constants and logarithmic factors suppressed. Shading marks the
			parameter-counting bound on circuit spacetime volume. The linear-depth
			floor without feedforward follows from backward circuit causal cones.
			Solid and dashed blue curves show
			Theorems~\ref{thm:tradeoff} and~\ref{thm:ordinary-tradeoff}; dark and
			light stars mark the constant-depth and $O(n)$-depth constructions.
			Gray curves and the diamond are earlier exact constructions
			\cite{sun2023,yuanzhang2023,rosenthal2026}. Open circles show
			Ref.~\cite{nehoran2026} at $\varepsilon=1/d$, offset to the right for
			visibility; the arrow shows increasing accuracy at constant fanout
			depth. For smaller widths, Sun et al.'s bound
			$O(nd+d^2/w)$~\cite{sun2023} completes the frontier up to factors
			polynomial in $n$.}
		\label{fig:landscape}
	\end{figure}
	
	\subsection{Trading width for depth}
	
	A multiplexer applies a unitary selected by the computational-basis
	value of a control register. A cosine--sine decomposition through
	$r$ qubits expresses $U$
	as $O(2^r)$ multiplexers whose component unitaries act on at most $n-r+1$ qubits
	\cite{paigewei1994,mottonen2004,shende2006}. For each multiplexer, a
	one-hot control routes the input to the selected component. The
	controlled component circuits run in parallel; reversing the routing returns the work
	qubits to $\ket0$ for reuse by the next factor. The resulting bound is proved
	in Supplementary Material~\ref{app:tradeoff}.
	
	\Needspace{9\baselineskip}
	\begin{theorem}[Depth--width trade-off]
		\label{thm:tradeoff}
		For every $1\le h\le d$, every $n$-qubit unitary has an exact clean
		fanout circuit of depth $O(h)$ and width
		$O\bigl((d^2/h)\log^2(4d/h)\bigr)$. The same bounds hold for exact
		deterministic adaptive circuits with parity feedforward. The width is
		within a factor $O(\log^2(4d/h))$ of Proposition~\ref{prop:width}.
	\end{theorem}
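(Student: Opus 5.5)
Choose $r$ with $2^r\approx h$, so $r=\lfloor\log_2 h\rfloor$ and $0\le r\le n$. Apply the cosine--sine decomposition through $r$ qubits. This writes $U$ as a product of $O(2^r)=O(h)$ multiplexers. Each multiplexer applies, conditioned on a control register, one of at most $d$ component unitaries on $m=n-r+1$ qubits. Note $2^m=2d/2^r=\Theta(d/h)$. These multiplexers are implemented sequentially, so the depth is $O(1)$ times their number, provided each multiplexer has constant depth and a width bound that is reused from one factor to the next.

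For a single multiplexer with $K\le 2^{n-m+1}$ branches $U_k$ on $m$ qubits, I would proceed as follows.
\begin{enumerate}
\item Compute a one-hot encoding of the control value into $K$ flag qubits. This is done in constant depth with generalized Toffoli gates and fanout, and it is clean after uncomputation.
\item Using flag-controlled swaps, route the $m$ target qubits into the $k$-th of $K$ disjoint $m$-qubit slots. The target is fanned out to the slots in constant depth; the swaps controlled by distinct flags act on disjoint slots, and fanout supplies copies of each target bit as needed.
\item Apply, in parallel over all $k$, the flag-controlled exact constant-depth circuit for $U_k$. This circuit comes from the coherent part of Theorem~\ref{thm:main} with $n$ replaced by $m$, made controlled at constant overhead. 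Slots that are not selected hold $\ket0$ and are untouched, since their control is off.
\item Reverse the routing and the one-hot computation.
\end{enumerate}
Each component circuit costs $O(m4^m\log(m+2))$ qubits, so the total is
\begin{equation*}
K\cdot O(m4^m\log(m+2))=O\!\left(\frac{d}{2^m}\,m4^m\log m\right)=O\!\left(\frac{d^2}{h}\,\log\frac{4d}{h}\,\log\log\frac{4d}{h}\right).
\end{equation*}
This is within $O((d^2/h)\log^2(4d/h))$. The routing overhead is $O(Km)$, which is lower order.

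The bookkeeping step is ensuring that all work qubits return to $\ket0$ after each multiplexer, so the same workspace is reused across the $O(h)$ sequential factors. Cleanness of each component, from Theorem~\ref{thm:main}, and of the routing give this, so the width does not multiply by $h$.

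The main obstacle I expect is the routing depth when $K$ is large. Fanning each target bit out to $K$ slots, and using flags controlled by $n-m$ bits, must stay constant depth within the width budget. I would first check that the one-hot decoder can be built from $K$ generalized Toffoli gates on fanned-out copies of the control bits, at $O(K(n-m))$ width. This width is dominated by the main term because $n-m\le\log K$.

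For the adaptive statement, compile Toffoli gates via Lemma~\ref{lem:toffoli} and fanout via the measurement gadget~\eqref{eq:measurement-compiler}. This costs constant factors in depth and width, with the Toffoli compilation absorbed into the logarithmic factors.

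Comparison with Proposition~\ref{prop:width} at depth $\Theta(h)$ gives the stated gap $O(\log^2(4d/h))$. For $h=d$ the components act on one or two qubits, and the decomposition gives the $O(d^2)$ sequential construction directly.
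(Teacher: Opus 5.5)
Your outline matches the paper's proof: cosine--sine decomposition through $r=\lfloor\log_2h\rfloor$ qubits, each of the $O(2^r)$ multiplexers built in constant depth by one-hot routing into disjoint slots plus parallel controlled copies of the Theorem~\ref{thm:main} circuit, clean work reused across factors, and fanout replaced only after the coherent circuit is complete. The gap is in the width of the selector decoding.

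Building the $K$ indicators from $K$ generalized Toffoli gates on fanned-out copies of the $\approx r$ control bits needs $\Theta(Kr)$ qubits for the copies alone. After Lemma~\ref{lem:toffoli} it needs $O(Kr\log\log(r+3))$. Your argument that this is dominated ``because $n-m\le\log K$'' compares it with the wrong quantity. The main term per branch is $m4^m\log(m+2)$, which is $O(1)$ when $m=O(1)$, while $r=n-m+1$ is then $\Theta(n)$. At $h=d$ (so $m=1$ and $K=\Theta(d)$) your circuit has width $\Omega(dn)$, whereas the theorem asserts $O(d)$. More generally, the decoder exceeds the budget $O(Km^24^m)$ whenever $m^24^m\ll n$, roughly for $d/h\lesssim\sqrt n/\log n$. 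In that range the claimed $O(\log^2(4d/h))$ gap to Proposition~\ref{prop:width} is not established. Your closing remark about ``the $O(d^2)$ sequential construction'' at $h=d$ does not cover this case, since the target there is depth $O(d)$ and width $O(d)$ simultaneously.

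The missing ingredient is a clean constant-depth binary-to-one-hot encoder of width $O(2^r)$, Eq.~\eqref{eq:onehot-encoding}. It works as follows:
\begin{enumerate}
\item Split the label into two halves.
\item Compute all equality indicators of each half, at cost $O(r\log(r+1)2^{\lceil r/2\rceil})=O(2^r)$ qubits after Toffoli compilation.
\item Form each full indicator as a product of two half indicators on fanout copies.
\item Clear the half indicators and label bits by parities.
\end{enumerate}
With this encoder the selection overhead is $O(mK)$, and the width bound holds for all $1\le h\le d$.

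Two smaller points should also be made explicit. First, controlled swaps on fanout copies do not by themselves clear the original data register, because the selected slot is unknown. The clean routing writes $o_cx_j$ into slot $c$ with Toffolis, uncomputes the copies, and erases $x_j$ by the parity of bit $j$ across all slots. Second, a controlled copy of the circuit for $U_k$ must realize the exact matrix $U_k$. A phase $e^{i\phi_k}$ that is harmless globally becomes a branch-dependent relative phase of the multiplexer. The paper avoids this by applying Theorem~\ref{thm:main} to each controlled component unitary itself, so leftover phases are global and multiply to a selector-independent phase.
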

	
	\subsection{Depth without feedforward}
	
	Replacing fanout in Theorem~\ref{thm:main} by CNOT trees gives
	ordinary depth $O(n)$. The direct construction in Supplementary
	Material~\ref{app:ordinary-tradeoff} reaches the same depth with fewer
	qubits and gates.
	
	\Needspace{10\baselineskip}
	\begin{theorem}[Optimal depth without feedforward]
		\label{thm:ordinary}
		Every $n$-qubit unitary has an exact clean circuit of one- and
		two-qubit gates with depth $O(n)$, width
		$O(d^2\log^2(n+2)/n)$, and $O(d^2\log^2(n+2))$ gates. For Haar-almost every unitary, every such circuit with any
		number of ancillas initialized in a product state has depth
		\begin{equation}
			h\ge2n-\log_2n-O(1).
			\label{eq:ordinary-lower}
		\end{equation}
		This lower bound also holds with single-qubit measurements and
		unconditional resets, provided no measurement outcome controls
		subsequent operations or the choice of output qubits.
	\end{theorem}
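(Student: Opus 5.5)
The theorem has two parts, and I would prove them separately: an upper bound by an explicit construction, and a lower bound by counting through backward causal cones.

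\emph{Upper bound.} I would not compile the fanout construction of Theorem~\ref{thm:main} through CNOT trees, since that route gives width $O(nd^2)$. Instead I would use the depth--width trade-off machinery of Theorem~\ref{thm:tradeoff} with $h\approx d/n$. In that regime each multiplexer component acts on $O(\log n)$ qubits. This suggests a direct route with no fanout at all.

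1. Apply the cosine--sine decomposition through $r=n-\log_2 n$ qubits. This writes $U$ as $O(d/n)$ multiplexers whose components act on $O(\log n)$ qubits.
2. Route each multiplexer with a one-hot decoder of the $r$ control qubits. A binary tree of Toffoli gates builds this decoder in depth $O(n)$, and control copies are spread by CNOT trees.
3. Implement each component ordinarily. A component on $m=O(\log n)$ qubits costs $O(4^m)=O(n^2)$ gates, and it can be made clean and controlled in depth $O(4^m/m)$ or better with modest width.
4. Pipeline the multiplexers so the decoder cost of depth $O(n)$ is amortised. The total gate count is then $O(d^2\log^2(n+2))$, and balancing width against depth gives width $O(d^2\log^2(n+2)/n)$ at depth $O(n)$.

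\emph{Step 4 is the main obstacle.} A naive sequence of $d/n$ multiplexers would give depth $d$, not $n$. The multiplexers must therefore be grouped so that $\Theta(n)$ layers suffice. I would try to reuse exactly the grouping from the proof of Theorem~\ref{thm:tradeoff}, with its fanout gates replaced by CNOT trees, which multiplies depth by $O(\log)$ factors only. If the bookkeeping does not close, I would fall back on Theorem~\ref{thm:main} with fanout replaced by trees. That still gives depth $O(n)$, at the cost of a weaker width.

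\emph{Lower bound.} Fix a circuit with no feedforward, and replace measurements and resets by unitary coupling to fresh environment qubits. This is valid because the outcomes control nothing. The implemented channel is then a fixed unitary dilation followed by tracing out the environment, with all ancilla inputs in a fixed product state.

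Each of the $n$ output qubits has a backward light cone of at most $2^h$ gates. Hence the output depends on at most $n2^h$ gates, each with $O(1)$ real parameters; gates outside every cone can be deleted.

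The map from these parameters to the implemented unitary is smooth. Its image in $\mathrm{PU}(d)$ therefore has dimension at most $c\,n2^h$, after accounting for the finitely many wiring patterns and a countable union. Haar-almost every $U$ then forces $c\,n2^h\ge d^2-1$, which gives $h\ge 2n-\log_2 n-O(1)$.

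One point needs care here: a single two-qubit gate has up to 15 real parameters, and that constant goes into the $O(1)$ term. The counting also has to be done on the Stinespring operator, as in Proposition~\ref{prop:width}, restricted to the branch with the product ancilla input.
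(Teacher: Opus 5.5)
\textbf{The upper-bound half has a genuine gap, and Step~4 is not a bookkeeping problem.} The cosine--sine decomposition writes $U=M_1R_1M_2\cdots M_{2^r}$ as a product of $2^{r+1}-1$ noncommuting factors, each acting on all $n$ data qubits. With a single input there is nothing to pipeline. Applying the factors one after another therefore costs depth at least $2^{r+1}-1=\Theta(d/n)$ for $r=n-\log_2n$.

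The grouping in Theorem~\ref{thm:tradeoff} does not avoid this. Its depth $O(h)$ is exactly this factor count with $r=\lfloor\log_2h\rfloor$. Moreover, replacing the fanouts inside each factor by CNOT trees costs depth of order $n$ per factor, since they spread bits to $2^{\Omega(n)}$ copies. So depth $O(n)$ leaves room for only $O(1)$ factors, that is, $r=O(1)$.

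That is your fallback, Theorem~\ref{thm:main} with trees. It does give depth $O(n)$, but only with width $O(nd^2\log(n+2))$ and $O(n^2d^2\log(n+2))$ gates. Both exceed the theorem's claimed width $O(d^2\log^2(n+2)/n)$ and gate count $O(d^2\log^2(n+2))$ by a factor of order $n^2/\log(n+2)$. So neither route proves the stated resource bounds.

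\textbf{The missing idea is to bypass multiplexers} and compile the Gaussian approximation and the residual repair directly with ordinary gates and limited workspace, as in Theorem~\ref{thm:ordinary-tradeoff}.
\begin{itemize}
\item The dominant cost is the three quadratic phase layers on $M=4dq=O(d\log(n+2))$ bits. Each is a product of $O(M^2)$ mutually commuting one- and two-qubit phase gates. Unlike CSD factors, these can be rescheduled freely. Make $\Theta(w/M)$ CNOT-tree copies of each bit, spread the interaction edges over the copies, and edge-colour the result. This gives depth $O(\log M+M^2/w)$ (Lemma~\ref{lem:ordinary-quadratic}).
\item Everything else fits in width $O(dn+w)$ and depth $O(n)$: the encoder with $K=\Theta(n)$ grid points per mode; the $q$-qubit Fourier transforms; the Yuan--Zhang preparation of the $2n$-qubit residual state with $\Theta(w)$ ancillas, at depth $O(n+d^2/w)$; and the $O(1)$ controls and reflections of the repair, at depth $O(\log w)$ via CNOT and Toffoli trees (Lemma~\ref{lem:ordinary-control}).
\item Taking $w=\Theta(d^2\log^2(n+2)/n)$ gives depth $O(n)$, and the gate count is at most depth times width.
\end{itemize}

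\textbf{Your lower-bound half is essentially the paper's causal-cone argument.} The paper averages outcomes into single-qubit channels with unital adjoints rather than dilating them, which is equivalent. Two small points:
\begin{itemize}
\item Apply the parameter count to the circuit pruned to the cones. Counting on the full Stinespring operator only recovers Proposition~\ref{prop:width}.
\item Include in that count the $O(n2^h)$ product-state parameters of the ancillas inside the cones.
\end{itemize}
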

	
	\begin{proof}
		Take $w=\Theta(d^2\log^2(n+2)/n)$ in the direct construction of
		Supplementary Material~\ref{app:ordinary-tradeoff}; this lies in its
		range for sufficiently large $n$, and the finitely many remaining
		cases are absorbed into the constants. Substitution gives depth
		$O(n)$, and the gate count is at most depth times width.
		
		The causal-cone argument of Refs.~\cite{sun2023,yuanzhang2023} also
		applies to measurements and resets without feedforward. Averaging over outcomes
		turns these operations into single-qubit trace-preserving channels.
		Their adjoints preserve the identity, so operations outside the
		backward causal cones of the fixed output qubits cannot affect the
		output channel. The cones contain only $O(n2^h)$ local operations
		and product-state parameters.
		Up to relabeling ancillas there are
		finitely many cone topologies. Each describes a channel family of
		dimension $O(n2^h)$, whereas unitary channels have dimension $d^2-1$.
		Thus a generic target requires $n2^h=\Omega(d^2)$, giving
		Eq.~\eqref{eq:ordinary-lower}.
	\end{proof}
	
	This exponentially improves the previous exact depth
	$\widetilde O(\sqrt d)$ of Refs.~\cite[Corollary~4.6]{rosenthal2026}
	and~\cite[Theorem~3]{yuanzhang2023}.
	Rosenthal's accompanying optimality result concerns queries to
	oracles that prepare the columns of $U$~\cite[Theorem~2.3]{rosenthal2026}.
	Nehoran and Yuen achieve approximate ordinary depth
	$O(\log d+\log\log(d/\varepsilon))$~\cite[Theorem~5.1(1)]{nehoran2026};
	Theorem~\ref{thm:ordinary} removes the approximation error at depth $O(n)$.
	More generally, at width $O(w)$ for $w\ge Cdn$, the depth is
	$O(n+d^2\log^2(n+2)/w)$, within a factor $O(\log^2(n+2))$
	of the generic lower bound (Supplementary
	Material~\ref{app:ordinary-tradeoff} and Fig.~\ref{fig:landscape}).
	
	\section{Discussion}
	\label{sec:outlook}
	
	The construction interleaves a constant number of quantum stages
	with classical parity computations. Using two-input XOR gates, each
	feedforward round can be evaluated in depth $O(\log w)=O(n)$.
	Whether a single feedforward round can achieve width
	$\widetilde O(d^2)$ at constant quantum depth remains open.
	
	Coherent error-$1/3$ approximations suffice if available for every
	unitary: Supplementary Material~\ref{app:constant-accuracy} combines
	$O(n)$ such circuits in parallel and corrects the final residual
	exactly. Supplementary Material~\ref{sm:independent} constructs the
	block $U/2$ from convex combinations of permutations, giving a proof
	independent of Nehoran--Yuen at width $2^{O(2^n)}$.
	
	Decomposing the two-qubit gates of Theorem~\ref{thm:main} into single-qubit gates and CNOTs
	\cite{barenco1995}, and replacing fanout by the clean construction of
	Ref.~\cite[Lemma~6.12]{ssw2026}, leaves only
	single-qubit and generalized Toffoli gates, at a possibly doubly
	exponential width. That lemma uses Rosenthal's approximate
	construction~\cite{rosenthal2021} and one exact amplification step;
	Section~4 of Ref.~\cite{ssw2026} gives another route using Grier,
	Morris, and Wu~\cite{grier2026} together with Rosenthal.
	Polynomial-size fanout elimination remains open
	\cite{fang2006,nadimpalli2024,anshu2024}.
	
	Our constructions use continuous gate parameters. With Clifford+$T$
	gates and computational-basis measurements, generic exact synthesis
	is impossible~\cite{kliuchnikov2013,giles2013}.
	
	With fanout or parity feedforward, can the width be reduced to
	$O(d^2)$ at constant depth? Our construction has overheads from
	residual-state preparation, Toffoli compilation, and the quadratic
	phase layers. It also remains open whether an approximation-free
	construction can reach width $\widetilde O(d^2)$.
	
	\section*{AI disclosure}
	The author posed the problem and proposed the residual-repair strategy. The results and derivations were developed through iterative exchanges with GPT-6 Astra, which contributed substantially to the constructions and proofs and drafted much of the manuscript. The author selected the directions pursued and independently verified every proof. Numerical checks on small systems covered Eqs.~\eqref{eq:residual-block}, \eqref{eq:addition}, \eqref{eq:cubic}, and~\eqref{sm:eq:permutation-dilation}. The author takes full responsibility for the article.

	\bibliographystyle{quantum}
	\bibliography{refs}
	
	\clearpage
	\onecolumn
	\appendix
	\setcounter{theorem}{0}
	\renewcommand{\thetheorem}{S\arabic{theorem}}
	\renewcommand{\theHtheorem}{S\arabic{theorem}}
	\setcounter{equation}{0}
	\renewcommand{\theequation}{S\arabic{equation}}
	\renewcommand{\theHequation}{S\arabic{equation}}
	\setcounter{figure}{0}
	\renewcommand{\thefigure}{S\arabic{figure}}
	\renewcommand{\theHfigure}{S\arabic{figure}}
	\section*{Supplementary Material}
	
	\section{Gaussian approximate synthesis}
	\label{app:seed}
	
	We use Nehoran and Yuen's continuous identity
	\cite[Theorem~3.1]{nehoran2026}, which they relate to Fiur\'a\v sek's
	continuous-variable gate synthesis~\cite{fiurasek2003}.
	Following the wrapped-sampling approach discussed in
	Ref.~\cite[Remark~4.4]{nehoran2026}, we make the Fourier steps exact
	and bound the errors in the three phase steps. A bitwise expansion
	of the phases then gives the width in Lemma~\ref{lem:seed}.
	
	\subsection{Continuous identity}
	For $x\in\R^m$, let $Q_Af(x)=e^{ix^\T Ax/2}f(x)$ for real symmetric
	$A$, and let $\Fou$ be the unitary Fourier transform with kernel
	$(2\pi)^{-m/2}e^{-ip^\T x}$. The map
	\begin{equation}
		\alpha\longmapsto\phi_\alpha(x)
		=\sqrt2\,\pi^{-m/4}(\alpha^\T x)e^{-|x|^2/2}
		\label{eq:single-excitation}
	\end{equation}
	is an isometry from $\C^m$ to the single-excitation subspace.
	
	\begin{lemma}[Gaussian identity]
		\label{lem:identity}
		If $S$ is real symmetric, $S^2=I$, and $\operatorname{tr}S=0$, then
		\begin{equation}
			Q_S\Fou Q_S\Fou Q_S\phi_\alpha=-i\phi_{S\alpha}.
			\label{eq:identity}
		\end{equation}
	\end{lemma}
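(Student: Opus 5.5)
The plan is to reduce the identity to a one-dimensional computation by diagonalizing $S$. Since $S$ is real symmetric with $S^2=I$, there is a real orthogonal $O$ with $S=O^\T DO$, where $D=\operatorname{diag}(s_1,\dots,s_m)$ and $s_k\in\{\pm1\}$. The condition $\operatorname{tr}S=0$ says that exactly $m/2$ of the $s_k$ equal $+1$ and $m/2$ equal $-1$.

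Let $R_O f(x)=f(O^\T x)$. I will check three covariance relations:
\begin{itemize}
\item $R_O^{-1}Q_SR_O=Q_D$;
\item $\Fou$ commutes with $R_O$, because the kernel $e^{-ip^\T x}$ and Lebesgue measure are $O$-invariant;
\item $R_O^{-1}\phi_\alpha=\phi_{O\alpha}$, because $|x|^2$ is invariant and $\alpha^\T O^\T x=(O\alpha)^\T x$.
\end{itemize}
Since also $OS\alpha=DO\alpha$, it suffices to prove the identity for $S=D$. Then $Q_D=\bigotimes_k Q_{s_k}$ and $\Fou=\bigotimes_k\Fou_1$ factorize over coordinates. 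By linearity, it suffices to take $\alpha=e_k$. Then $\phi_{e_k}=h_1(x_k)\prod_{l\ne k}h_0(x_l)$, with $h_0=\pi^{-1/4}e^{-x^2/2}$ and $h_1=\sqrt2\pi^{-1/4}xe^{-x^2/2}$.

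The one-dimensional core is the operator $T_s=Q_s\Fou_1Q_s\Fou_1Q_s$ for $s=\pm1$. I claim $T_sh_0=\lambda_sh_0$ and $T_sh_1=\lambda_s\mu_sh_1$, with $\lambda_+\lambda_-=1$ and $\mu_s=-is$. Given this, the $k$-th factor contributes $\lambda_{s_k}\mu_{s_k}$ and every other factor contributes $\lambda_{s_l}$. The total is $\lambda_+^{m/2}\lambda_-^{m/2}\mu_{s_k}=-is_k$, which is exactly the coefficient in $-i\phi_{De_k}=-is_k\phi_{e_k}$. This is the step where $\operatorname{tr}S=0$ is used.

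To establish the claim, I would apply $T_s$ to the generating function $g_y(x)=e^{-x^2/2+yx}$. Here $h_0$ and $h_1$ are the values at $y=0$ and $\partial_y$ at $y=0$, up to normalization. Each step keeps the family of functions $e^{-ax^2/2+bx+c}$ with $\operatorname{Re}a>0$.
\begin{itemize}
\item $Q_s$ sends $a\mapsto a-is$.
\item $\Fou_1$ sends $(a,b)\mapsto(1/a,\,-ib/a)$, with prefactor $a^{-1/2}$ (principal branch) and an additive shift $c\mapsto c+b^2/(2a)$.
\end{itemize}
For $s=1$ the quadratic coefficient runs
\[
1\to1-i\to\tfrac{1+i}{2}\to\tfrac{1-i}{2}\to1+i\to1.
\]
So $T_1$ maps Gaussians back to the standard Gaussian. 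For $s=-1$ the chain is the complex conjugate one, ending again at $1$. The linear coefficient gets multiplied by $(-i/a_1)(-i/a_2)$, and in the exponent the $y^2$ term must cancel. These give $\mu_s$ and confirm that $T_s$ preserves $h_0$ and $h_1$ up to phases.

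As a cross-check, I would use the symplectic picture: shears and the rotation by $\pi/2$ compose to a rotation by $\mp\pi/2$. Hence $T_s$ is a metaplectic rotation, and on Hermite functions it acts by $(\pm i)^{n}$ times a common phase. This predicts $\mu_s=-is$.

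The main obstacle is exact phase bookkeeping. The prefactors $a^{-1/2}$ need consistent branches, and I must verify that $\lambda_+\lambda_-=1$, not merely $|\lambda_\pm|=1$. For this I will compute $\lambda_+$ explicitly as a product of principal square roots, obtaining $e^{i\pi/4}$ or $e^{-i\pi/4}$. I will then observe that the $s=-1$ chain consists of the conjugate values $\bar a_j$, all in the right half-plane, so $\lambda_-=\overline{\lambda_+}$ and the product is $1$.
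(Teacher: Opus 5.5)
Your proposal is correct and is essentially the paper's computation done one eigen-direction at a time: the paper tracks $(A,b,\kappa)$ through the same five steps with matrix-valued $A$, and uses $S^2=I$ to get $(I-iS)^{-1}=\tfrac12(I+iS)$ directly, so the chain $I\to I-iS\to\tfrac12(I+iS)\to\tfrac12(I-iS)\to I+iS\to I$ and $b\mapsto-\tfrac12(I+iS)^2\alpha=-iS\alpha$ need no diagonalization; it uses the eigenvalues only for the prefactor $\prod_j(1+is_j)/\sqrt2=e^{i\pi\operatorname{tr}S/4}=1$, which is your $\lambda_+^{m/2}\lambda_-^{m/2}=1$. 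There are two harmless slips: with your definition $R_Of(x)=f(O^\T x)$ and $S=O^\T DO$, the correct relations are $R_OQ_SR_O^{-1}=Q_D$ and $R_O\phi_\alpha=\phi_{O\alpha}$ (or keep your formulas and define $R_Of(x)=f(Ox)$ instead), and the $y^2$ term does not cancel but equals $y^2/2$ for both signs, as it must since $g_y=e^{y^2/2}e^{-(x-y)^2/2}$ is a displaced Gaussian, which does not matter because you only use $g_y$ and $\partial_yg_y$ at $y=0$.
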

	\begin{proof}
		On functions $\kappa(b^\T x)e^{-x^\T Ax/2}$ with $\operatorname{Re}A>0$,
		a Gaussian integral sends
		$(A,b,\kappa)$ under $\Fou$ to
		$(A^{-1},-iA^{-1}b,\kappa\det(A)^{-1/2})$; the square root is continued
		from $A=I$. The five operations therefore give
		\begin{align}
			A:\quad I&\to I-iS\to\tfrac12(I+iS)
			\to\tfrac12(I-iS)\to I+iS\to I, \label{eq:gaussian-flow}\\
			b&\to-\tfrac i2(I+iS)b
			\to-\tfrac12(I+iS)^2b=-iSb. \nonumber
		\end{align}
		For eigenvalues $s_j=\pm1$ of $S$, the determinant factors multiply to
		$\prod_j\frac{1+is_j}{\sqrt2}
		=e^{i\pi\operatorname{tr}S/4}=1$.
	\end{proof}
	
	We use the realification and involution reductions of
	Ref.~\cite[Lemma~2.1]{nehoran2026}. For $U\in\mathrm U(d)$, put
	$H=\ket0\!\bra1\otimes U^\dagger+\ket1\!\bra0\otimes U$, also used in
	Ref.~\cite[following Proposition~5.12]{ssw2026}, and
	\begin{equation}
		S_U=I\otimes\operatorname{Re}H+
		\begin{pmatrix}0&-1\\1&0\end{pmatrix}\otimes\operatorname{Im}H.
		\label{eq:real-form}
	\end{equation}
	This is a real symmetric traceless involution on $m=4d$ coordinates.
	For $\ket{-i}=(\ket0-i\ket1)/\sqrt2$,
	\begin{equation}
		S_U\ket{-i}\ket0\ket\psi=\ket{-i}\ket1U\ket\psi.
		\label{eq:involution-action}
	\end{equation}
	Thus two label qubits and fixed input and output rotations reduce
	synthesis of $U$ to Eq.~\eqref{eq:identity}.
	
	\subsection{Periodization and a uniform error bound}
	Set $K=2^q\ge16$, $\delta=\sqrt{2\pi/K}$, and
	$I_K=\{-K/2,\ldots,K/2-1\}$. For Schwartz functions define
	\begin{equation}
		(Pf)(k)=\delta^{m/2}\sum_{\ell\in\Z^m}
		f\bigl(\delta(k+K\ell)\bigr),\qquad k\in I_K^m.
		\label{eq:periodize}
	\end{equation}
	Let $u_s=P\psi_s$ in one dimension, where
	$\psi_0(x)=\pi^{-1/4}e^{-x^2/2}$ and
	$\psi_1(x)=\sqrt2x\psi_0(x)$. The two vectors have opposite parity
	under $k\mapsto-k\pmod K$, so they are orthogonal. Define $\Phi e_j$
	as the product of normalized $u_1$ in mode $j$ and normalized $u_0$
	in the other modes. Factorization gives
	\begin{equation}
		P\phi_\alpha=\|u_1\|\,\|u_0\|^{m-1}\Phi\alpha.
		\label{eq:encoding-norm}
	\end{equation}
	Writing
	$\theta_K=\sum_{t\in\Z}e^{-\pi Kt^2/2}$ and
	$M_K=\sum_{t\in\Z}t^2e^{-\pi Kt^2/2}$, Poisson summation gives
	\begin{equation}
		\|u_0\|^2=\theta_K^2\ge1,
		\qquad
		\|u_1\|^2=\theta_K^2-2\pi K\theta_KM_K
		=1+O(Ke^{-\pi K/2}).
		\label{eq:sampled-norms}
	\end{equation}
	Consequently the prefactor in Eq.~\eqref{eq:encoding-norm} is at least
	$1/2$ once $K$ exceeds an absolute constant, independently of $m$.
	
	Let $\Fou_K$ have entries $K^{-1/2}e^{-2\pi ijk/K}$, and let $D_A$
	multiply $\ket k$ by $e^{i\delta^2k^\T Ak/2}$.
	\begin{lemma}[Discretized identity]
		\label{lem:discretization}
		Under the hypotheses of Lemma~\ref{lem:identity}, there are absolute
		constants $C,c>0$ such that, whenever $K\ge C\log(m+1)$,
		\begin{equation}
			\bigl\|D_S\Fou_K^{\otimes m}D_S\Fou_K^{\otimes m}D_S\Phi
			+i\Phi S\bigr\|_{\mathrm{op}}
			\le Cm(m+\sqrt K)e^{-cK}.
			\label{eq:discretization}
		\end{equation}
	\end{lemma}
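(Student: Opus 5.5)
Following Remark~4.4 of Ref.~\cite{nehoran2026}, I will compare the discrete chain with the periodization $P$ of the continuous chain in Lemma~\ref{lem:identity}, one operation at a time. Two facts drive this. \emph{Intertwining of the Fourier step}: for Schwartz $f$, the Poisson summation formula gives $\Fou_K^{\otimes m}Pf=P\Fou f$ exactly, because the grid spacing satisfies $\delta^2K=2\pi$. This is where wrapped sampling makes the Fourier steps exact. \emph{Approximate intertwining of the phase step}: we have $D_SPf-PQ_Sf=\delta^{m/2}\sum_\ell[e^{i\delta^2k^\T Sk/2}-e^{i\delta^2(k+K\ell)^\T S(k+K\ell)/2}]f(\delta(k+K\ell))$. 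The $\ell=0$ term cancels, and every $\ell\ne0$ term is evaluated at a point with some coordinate of size at least $\delta K/2=\sqrt{\pi K/2}$.

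So I will track, at each stage, a function of the form $g(x)=\kappa(b^\T x)e^{-x^\T Ax/2}$ with $\kappa$ linear. The flow~\eqref{eq:gaussian-flow} gives the $A$'s explicitly: $I$, $I-iS$, $\tfrac12(I+iS)$, $\tfrac12(I-iS)$, $I+iS$. Each has real part $I$ or $I/2$, so $|g(x)|\le|b^\T x|\,|\det|^{\cdot}e^{-|x|^2/4}$, with $|b|=O(|\alpha|)$ uniformly. Applying each $\ell\ne0$ term to such a $g$, the aliased sum is bounded by a Gaussian tail. In the Hilbert space $\ell^2(I_K^m)$, its norm is at most $C\,m\,(\sqrt K+\ldots)e^{-cK}|\alpha|$. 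The factor $m$ comes from a union over which coordinate wraps. The factor $\sqrt K$ or $m$ comes from the linear prefactor $|b^\T x|$ evaluated at $|x|\sim\sqrt K$, together with the normalization $\|P g\|$ that produces the $m$ term. The Gaussian $e^{-|x|^2/4}$ in the other $m-1$ coordinates sums to $\theta$-type factors $(1+O(e^{-cK}))^{m}$. These stay bounded precisely when $K\ge C\log(m+1)$, which is where that hypothesis enters.

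The telescoping then goes as follows. Write $X=D_S\Fou_K^{\otimes m}D_S\Fou_K^{\otimes m}D_S$. All the discrete operators are unitary, so
\[\bigl\|XP\phi_\alpha-PQ_S\Fou Q_S\Fou Q_S\phi_\alpha\bigr\|\]
is at most the sum of the three phase-step errors. Each is applied to the corresponding intermediate Gaussian from~\eqref{eq:gaussian-flow}, and the Fourier steps contribute zero. By Lemma~\ref{lem:identity}, the continuous output is $-i\phi_{S\alpha}$, whose periodization equals $-i\|u_1\|\|u_0\|^{m-1}\Phi S\alpha$ by~\eqref{eq:encoding-norm}. Similarly $P\phi_\alpha=\|u_1\|\|u_0\|^{m-1}\Phi\alpha$. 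Dividing by the prefactor, which is at least $1/2$ by~\eqref{eq:sampled-norms}, gives the operator bound~\eqref{eq:discretization}, since everything is linear in $\alpha$.

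The main obstacle is the aliasing estimate for the middle stages. There, $A=\tfrac12(I\pm iS)$ only has real part $I/2$, and the phase difference is unbounded in $\ell$. I will ignore the phases entirely and bound each wrapped term by its modulus. The cost is that the decay becomes $e^{-|x|^2/4}$ in place of $e^{-|x|^2/2}$; this is still a Gaussian, so it gives $e^{-cK}$ with a smaller $c$. The remaining work is to make the counting of wrapped lattice vectors dimension-uniform. I will handle it by factorizing the sum coordinatewise, bounding $\sum_{\ell\in\Z^m}\prod_j$ by the product of one-dimensional sums, and subtracting the $\ell=0$ term. This yields $(1+O(e^{-cK}))^m-1\le 2me^{-cK}$ whenever $me^{-cK}\le1$, which again requires $K\ge C\log(m+1)$.
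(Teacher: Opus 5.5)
Your outline follows the paper's proof: the exact intertwining $\Fou_K^{\otimes m}Pf=P\Fou f$, cancellation of the $\ell=0$ term giving an error $2\tau(f)$ per phase step, telescoping over the three phase steps on $f_0,f_2,f_4$, coordinatewise factorization with the central cell subtracted, $K\ge C\log(m+1)$ to keep an $m$-th power bounded, and division by the prefactor $\ge1/2$.

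The step that fails as written is the pointwise envelope. After factorization, each of the $m-1$ unwrapped coordinates contributes the square root of the one-dimensional $\ell^2$ mass of the envelope. That mass must be $1+O(e^{-cK})$, because any fixed constant above $1$ is raised to the power $m-1$.

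You state the uniform envelope $e^{-|x|^2/4}$ for all stages. Take a stage with $\operatorname{Re}A=I$, for example $f_0=\phi_\alpha$ with amplitude $\sqrt2\,\pi^{-m/4}$. Each unwrapped coordinate then contributes $\bigl(\pi^{-1/2}\delta\sum_k e^{-\delta^2k^2/2}\bigr)^{1/2}\to2^{1/4}$. The bound picks up a factor $2^{(m-1)/4}$, so you would need $K=\Omega(m)$ rather than $K\ge C\log(m+1)$.

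Your remark that weaker decay only produces a smaller $c$ is correct in the wrapped coordinate. It misses this effect in the other coordinates. For the middle stage, where $\operatorname{Re}A=I/2$, the envelope $e^{-|x|^2/4}$ is harmless only because the amplitude carries exactly $|\det(I-iS)|^{-1/2}=2^{-m/4}$. You leave this factor unspecified as $|\det|^{\cdot}$, and any constant slack per coordinate in that bookkeeping is fatal.

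The paper avoids the bookkeeping altogether. Since $Q_A$ and $\Fou$ are unitary, each $f_j$ is a unit vector. For $f=\kappa(b^\T x)e^{-x^\T Ax/2}$ with $\operatorname{Re}A=\gamma I$, one computes $\|f\|^2=|\kappa|^2\|b\|^2(2\gamma)^{-1}(\pi/\gamma)^{m/2}$. Cauchy--Schwarz on $b^\T x$ then gives $|f(x)|^2\le2\gamma(\gamma/\pi)^{m/2}|x|^2e^{-\gamma|x|^2}$, uniformly in $\alpha$, with $\gamma=1,\tfrac12,1$ matched to the stage. By Poisson summation, each one-dimensional central factor is then $1+O(e^{-cK})$.

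Next write $|x|^2=\sum_ax_a^2$ and use $\sqrt{\sum_ay_a}\le\sum_a\sqrt{y_a}$. The sum over $a$ contributes one factor $m$. A wrap in coordinate $a$ contributes a further $\sqrt K$, and a wrap only in the other coordinates contributes a further $m$. This is the origin of $m(m+\sqrt K)$. With this envelope in place of yours, the rest of your argument goes through.
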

	\begin{proof}
		Poisson summation and $\delta^2K=2\pi$ give the exact relation
		\begin{equation}
			\Fou_K^{\otimes m}Pf=P\Fou f.
			\label{eq:poisson}
		\end{equation}
		Thus only the phase multiplications contribute to the error.
		The terms with $\ell=0$ agree and all other terms differ by phases,
		so
		\begin{equation}
			\|D_APf-PQ_Af\|\le2\tau(f),\qquad
			\tau(f)=\sum_{\ell\ne0}\left(\delta^m\sum_{k\in I_K^m}
			|f(\delta(k+K\ell))|^2\right)^{1/2}.
			\label{eq:alias}
		\end{equation}
		Let $f_0,\ldots,f_5$ be the successive continuous states in
		Eq.~\eqref{eq:identity}, starting at $\phi_\alpha$ with $\|\alpha\|=1$.
		Telescoping through the unitary discrete gates bounds the error on
		$P\phi_\alpha$ by $2[\tau(f_0)+\tau(f_2)+\tau(f_4)]$.
		
		These three functions have unit norm and the form
		$\kappa(b^\T x)e^{-x^\T Ax/2}$, with $\operatorname{Re}A=\gamma I$ for
		$\gamma=1,1/2,1$. Their normalization and Cauchy--Schwarz imply
		\[
		|f(x)|^2\le2\gamma(\gamma/\pi)^{m/2}|x|^2e^{-\gamma|x|^2},
		\]
		uniformly in $\alpha$. For $j=0,2$, set
		$G_j(t)=\delta\sum_{k\in I_K}y^je^{-\gamma y^2}$ with
		$y=\delta(k+Kt)$, and
		$a_j=(\gamma/\pi)^{1/4}\sqrt{G_j(0)}$,
		$s_j=(\gamma/\pi)^{1/4}\sum_{t\ne0}\sqrt{G_j(t)}$.
		Writing $|x|^2=\sum_ax_a^2$ and applying
		$\sqrt{\sum_ay_a}\le\sum_a\sqrt{y_a}$ gives
		\begin{equation}
			\tau(f)\le\sqrt{2\gamma}\,m
			\left[s_2(a_0+s_0)^{m-1}
			+a_2\bigl((a_0+s_0)^{m-1}-a_0^{m-1}\bigr)\right].
			\label{eq:factorized-tail}
		\end{equation}
		Poisson summation of the full one-dimensional grid bounds
		$a_0\le1+Ce^{-cK}$ and $a_2\le(2\gamma)^{-1/2}+Ce^{-cK}$.
		For a noncentral cell, all $|y|$ are at least
		$L_t=\sqrt{\pi K/2}(2|t|-1)$. The integral estimates
		\begin{align*}
			\int_a^\infty e^{-\gamma y^2}\,dy
			&\le\frac{e^{-\gamma a^2}}{2\gamma a},\\
			\int_a^\infty y^2e^{-\gamma y^2}\,dy
			&\le\left(\frac a{2\gamma}+\frac1{4\gamma^2a}\right)e^{-\gamma a^2}
		\end{align*}
		and bounding each decreasing grid sum by its first term plus the
		corresponding integral give
		\begin{align}
			G_0(t)&\le C(\delta+L_t^{-1})e^{-\gamma L_t^2},\nonumber\\
			G_2(t)&\le C(\delta L_t^2+L_t+L_t^{-1})e^{-\gamma L_t^2}.
			\label{eq:tail-cells}
		\end{align}
		Taking square roots and summing over $t\ne0$ gives
		$s_0=O(\eta)$ and $s_2=O(\sqrt K\,\eta)$ for
		$\eta=e^{-cK}$. Constants are uniform for $1/2\le\gamma\le1$.
		If $K\ge C\log(m+1)$ with sufficiently large $C$, then
		\begin{align}
			(a_0+s_0)^{m-1}&=O(1),\nonumber\\
			(a_0+s_0)^{m-1}-a_0^{m-1}&=O(m\eta),\nonumber\\
			\tau(f)&=O\!\left(m(m+\sqrt K)\eta\right).
			\label{eq:uniform-gaussian-tail}
		\end{align}
		Apply this to the three continuous states and divide by the common
		prefactor in Eq.~\eqref{eq:encoding-norm}. The estimate holds for every
		unit vector $\alpha$, proving the operator-norm bound.
	\end{proof}
	
	\subsection{Circuit implementation and width}
	\begin{lemma}[Small Fourier transforms]
		\label{lem:small-fourier}
		For $K=2^q$, $\Fou_K$ has an exact clean constant-depth implementation
		with one- and two-qubit gates and fanout on $O(K^4)$ qubits.
	\end{lemma}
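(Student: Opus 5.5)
We need to show that $\Fou_K$ on $q=\log_2 K$ qubits has an exact clean constant-depth fanout circuit of width $O(K^4)$. We use the clean diagonalization strategy, with $K^2$ parallel controlled blocks as the source of the width.

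\textbf{Copying the input.} Fan out the $q$ input qubits into $O(K^2)$ copies. This is one fanout layer per input bit, run in parallel.

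\textbf{One-hot encoding.} Convert each copy into a one-hot register $\ket{e_j}$ on $K$ qubits. Each bit $\ket{[x=j]}$ is a generalized Toffoli with negated controls on a private copy. This computes a $K$-bit one-hot string using $O(Kq)$ copies, and the Toffolis are later compiled into fanout by Lemma~\ref{lem:toffoli}. Because the map $\ket x\mapsto\ket x\ket{e_x}$ is a permutation built from Toffolis, it is exact and clean.

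\textbf{The general template.} I would use the standard template for a $K\times K$ unitary $W$, and avoid the generic Theorem~\ref{thm:main}, which would be circular. Write $W=\sum_{x}\ket{w_x}\bra x$ with columns $\ket{w_x}=\Fou_K\ket x$.

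\emph{Step 1.} Conditioned on the one-hot bit $e_x$, prepare $\ket{w_x}$ in a fresh $q$-qubit register. Each $\ket{w_x}$ is a product state, so this is a single layer of controlled single-qubit rotations, with the control fanned out. The result is $\ket x\ket{e_x}\ket{w_x}$ summed over branches.

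\emph{Step 2.} Erase $\ket x$ from the register. The remaining state $\ket{w_x}$ must be disentangled from $\ket x$ and $\ket{e_x}$. Cleanly this amounts to implementing $\ket{w_x}\mapsto\ket x$ controlled on $x$, which is the inverse problem.

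\textbf{Exploiting the product structure.} To resolve the circularity I would use the product form directly. Since $\Fou_K\ket x=\bigotimes_{\ell}(\ket0+e^{2\pi i x/2^{\ell}}\ket1)/\sqrt2$, the transform equals $H^{\otimes q}$ followed by a diagonal phase $e^{2\pi i x y/K}$ (up to bit reversal), applied to the output bits $y$. The phase $xy/K \bmod 1$ is a sum of pairwise bit products $x_ay_b 2^{a+b-q}$. These are commuting two-qubit controlled phases, $O(q^2)$ of them.

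\textbf{Making the phases constant depth.} Commuting diagonal two-qubit gates run in constant depth with fanout by \cite[Theorem~3.2]{hoyer2005}. Fan out each bit to $O(q)$ copies, apply all pairwise phases in one layer, and unfan. The catch is that the standard QFT has output bit $y$ depending on input $x$ through Hadamards interleaved with phases. The exact factorization is $\Fou_K=\text{(bit reversal)}\prod_b\bigl(H_b\prod_{a>b}CP_{ab}\bigr)$, which has sequential Hadamards.

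To avoid this, I would use the one-hot/parallel approach after all. For each $x$ the diagonal phase on $y$ is a product of single-qubit phases, so the Fourier transform can be written as: copy, apply $H^{\otimes q}$ to a fresh register to create $\sum_y\ket y$, apply phases $e^{2\pi ixy/K}$ (commuting, constant depth), then uncompute $\ket x$ from $\ket y$. That last uncomputation is again a Fourier-type problem.

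\textbf{The main obstacle.} The central difficulty is exactly this clean uncomputation of the input index. I would handle it by implementing the $K\times K$ matrix via a permutation-plus-diagonal sandwich on a $K^2$-sized one-hot representation. The $x$ register is one-hot encoded into $K$ qubits, the $y$ register into $K$ qubits, and their joint index into a $K^2$-qubit one-hot grid, each indicator with $O(K^2)$ fanout copies. This gives total width $O(K^4)$, matching the statement. Within this representation, the uncomputation reduces to constant-depth fanout and Toffoli layers. Verifying that it is clean is where I expect the real work to lie.
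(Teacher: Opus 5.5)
\textbf{There is a genuine gap at the step you flag as the crux.} Your opening stages are fine: copying $x$, one-hot encoding, and the Fourier-state preparation $\ket{x}\ket{0}\mapsto\ket{x}\ket{\psi_x}$, with $\ket{\psi_x}=\Fou_K\ket{x}$. You get this from Hadamards on a fresh register followed by the commuting bitwise phases $e^{-2\pi i xy/K}$, parallelized with fanout. This is the first ingredient of the construction the paper cites. The unproved step is cleanly erasing $x$ from $\ket{x}\ket{\psi_x}$, and the mechanism you propose for it cannot work.

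Fanout, generalized Toffoli, and diagonal phase gates each send a computational basis state to a phase times a basis state. One-hot and grid encodings are themselves basis permutations. So any circuit built from these layers, with any number of zero-initialized ancillas, is a monomial unitary $M\ket{z}=e^{i\phi(z)}\ket{\sigma(z)}$. Suppose such an $M$ mapped $\ket{x}\ket{\psi_x}\ket{0}$ to $\ket{0}\ket{\psi_x}\ket{0}$. Every amplitude of $\ket{\psi_x}$ is nonzero, so comparing supports gives $\{\sigma(x,y,0)\}_y=\{(0,y,0)\}_y$ for every $x$. That contradicts injectivity of $\sigma$ once $K\ge2$. Hence no permutation-plus-diagonal bookkeeping on a $K^2$-sized grid reduces the erasure to fanout and Toffoli layers. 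The information about $x$ sits in the relative phases of a single copy of $\ket{\psi_x}$, and it must be extracted by interference without disturbing that copy. Your $O(K^4)$ count is likewise fitted to the statement rather than derived from any operation.

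\textbf{The missing idea is phase estimation with exact decoding.} The paper does not rebuild this. It cites Nehoran--Yuen's exact Fourier circuit~\cite{nehoran2026}, which combines Fourier-state preparation~\cite{hoyer2005} with exact phase decoding~\cite{takahashi2016}, a final modular sum, and ancilla cleanup, and it reverses that circuit to fix the sign convention. One way to realize the mechanism is as follows.
\begin{itemize}
\item The state $\ket{\psi_x}$ is an eigenvector of the cyclic shift $S\ket{y}=\ket{y+1 \bmod K}$ with eigenvalue $e^{2\pi i x/K}$.
\item Prepare $K$ fresh $q$-qubit registers in uniform superposition. Compute the sum $s$ of their values modulo $K$ (a classical reversible computation, exact in constant depth with fanout), add $s$ to the data register, and uncompute $s$. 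This leaves $\ket{\psi_x}$ unchanged and, by phase kickback, turns every register into $K^{-1/2}\sum_c e^{2\pi i xc/K}\ket{c}$.
\item On register $t$, apply the phase $e^{-2\pi i tc/K}$ and then $H^{\otimes q}$. The resulting state has $\ket{0^q}$ amplitude exactly $[x=t]$, so an exact NOR of its bits writes the indicator $[x=t]$ deterministically on every branch.
\item Parities of these one-hot indicators XOR $x$ out of the input register. Reversing the tests, phases, and kickback then returns every ancilla to $\ket{0}$.
\end{itemize}
Without such a step, your outline prepares $\ket{x}\ket{\psi_x}$ but never implements $\Fou_K$.
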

	This is Nehoran--Yuen's exact Fourier circuit
	\cite[Sec.~5.3.1, Algorithm~6]{nehoran2026}, based on Fourier-state
	preparation~\cite[Lemma~4.13]{hoyer2005} and exact phase
	decoding~\cite[Sec.~4.1]{takahashi2016}, and including the final
	modular sum and ancillary cleanup. Their Fourier convention
	has the opposite sign; reversing the circuit gives $\Fou_K$ with the
	same resources.
	
	\begin{proof}[Proof of Lemma~\ref{lem:seed}]
		Choose the smallest power of two $K$ satisfying the hypothesis of
		Lemma~\ref{lem:discretization} and making its bound at most
		$\varepsilon$. Then $K=O(\log(d/\varepsilon))$ for $m=4d$.
		The input rotations prepare $\ket{-i}\ket0$ on the label qubits.
		An encoder maps $\ket j\ket0$ to $\ket0\Phi e_j$ as follows.
		Convert the label to one-hot form by
		Eq.~\eqref{eq:onehot-encoding}, using $O(m)$ qubits.
		
		In each mode prepare $u_0/\|u_0\|$ and, controlled on its indicator,
		exchange it with $u_1/\|u_1\|$ by the Householder reflection with vector
		$v=(u_0/\|u_0\|-u_1/\|u_1\|)/\sqrt2$. If
		$P_v\ket0=\ket{v,0}$, apply $P_v^\dagger$ first, reflect about the zero
		state of the mode and \emph{all} its preparation work, and then apply
		$P_v$. The resulting operator is
		\begin{equation}
			P_v(I-2\ket0\!\bra0)P_v^\dagger
			=I-2\ket{v,0}\!\bra{v,0}.
			\label{eq:ordinary-householder}
		\end{equation}
		Thus the reflection preserves the zero-work subspace. The preparation uses
		$s=O(q^2K)$ qubits by Lemma~\ref{lem:stateprep}. Compiling the
		reflection on these qubits costs
		$O(s\log(s+1))=O(q^3K)$ by Lemma~\ref{lem:toffoli}; controlling the
		other gates adds only $O(s)$ work. This is within the $O(K^4)$ bound.
		The states $u_0,u_1$ have opposite eigenvalues under modular negation.
		We erase the indicator by adapting the phase-kickback construction of
		Ref.~\cite[Algorithm~3]{nehoran2026}:
		apply a Hadamard to the indicator $b$, controlled modular negation
		$k\mapsto-k\pmod K$ on the mode, and another Hadamard to $b$.
		For each initial indicator value $b$, the mode has negation eigenvalue
		$(-1)^b$, so this maps $\ket b\ket{u_b/\|u_b\|}$ to
		$\ket0\ket{u_b/\|u_b\|}$. Controlled modular negation is a classical
		permutation on $q+1$ bits. Parallel truth-table tests, compiled using
		Lemma~\ref{lem:toffoli}, implement it exactly in constant depth with
		clean work and $O(Kq\log(q+1))$ qubits.
		
		Apply the five discrete layers, reverse the encoder, and use the
		fixed output rotations and phase $i$ specified by
		Eq.~\eqref{eq:involution-action}. In two's complement,
		$k_a=\sum_i c_i k_{a,i}$ with
		$c_i=2^i$ except $c_{q-1}=-2^{q-1}$, so
		\begin{equation}
			\frac{\delta^2}{2}k^\T Ak
			=\frac{\delta^2}{2}\sum_{a,b,i,i'}
			A_{ab}c_ic_{i'}k_{a,i}k_{b,i'}.
			\label{eq:bitwise-quadratic}
		\end{equation}
		After collecting terms this is a product of single-qubit and
		controlled-phase gates on $mq$ bits. Fanout provides a separate
		pair of copies for each interaction. Apply all phases in parallel and
		reverse the fanouts, at constant depth and width $O(m^2q^2)$.
		This replaces the product computation in
		Ref.~\cite[Algorithm~4]{nehoran2026}, whose phase-layer width is
		$O(m^2K^2q\log q)$. The Fourier layers and encoder require
		$O(mK^4)$ qubits. Combining the two bounds gives
		\[
		O\!\left(d^2[\log\log(d/\varepsilon)]^2
		+d\log^4(d/\varepsilon)\right)
		\]
		width. Write $\mathcal D_S=D_S\Fou_K^{\otimes m}D_S\Fou_K^{\otimes m}D_S$.
		Since $\Phi^\dagger\Phi=I$, the decoded block satisfies
		\[
		\bigl\|i\Phi^\dagger\mathcal D_S\Phi-S\bigr\|_{\mathrm{op}}
		\le\bigl\|\mathcal D_S\Phi+i\Phi S\bigr\|_{\mathrm{op}}
		\le\varepsilon.
		\]
		The fixed label preparations and output rotations do not increase
		this error.
	\end{proof}

	\section{Exact state preparation}
	\label{app:stateprep}
	
	\begin{lemma}[Generalized Toffoli gates]
		\label{lem:toffoli}
		A generalized Toffoli with $k$ controls has an exact clean
		constant-depth circuit of one- and two-qubit gates and fanout on
		$O(k\log\log(k+3))$ qubits.
	\end{lemma}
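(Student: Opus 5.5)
The plan is to reduce the $k$-input AND to an OR of exponentially fewer bits, recurse a constant number of times, and then uncompute every intermediate register. Since $\mathrm{AND}(x)=\lnot\mathrm{OR}(\bar x)$, conjugating the controls by $X$ gates lets me work with OR throughout. The basic tool is the Høyer--Špalek counting trick, assumed here rather than re-derived: $\mathrm{OR}(x_1,\dots,x_m)=0$ exactly when the Hamming weight $|x|$ is $0$, and $|x|\le m<2^{t}$ for $t=\lceil\log_2(m+1)\rceil$. So it suffices to write $|x|$ exactly into a $t$-qubit counter.

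\textbf{Counter step.} Put a $t$-qubit counter in the Fourier basis $\Fou_{2^t}\ket0$. For each $i$, fan $x_i$ out to $t$ copies, and use copy $j$ to control a phase $e^{2\pi i\,2^{j}/2^{t}}$ on counter qubit $j$. These controlled phases commute, and fanout gives each interaction its own pair of copies (as in the proof of Lemma~\ref{lem:seed}), so they run in constant depth. I then reverse the fanouts and apply the inverse Fourier transform, which exactly leaves $\ket{|x| \bmod 2^t}=\ket{|x|}$ in the counter. This step costs $O(mt)=O(m\log m)$ qubits. The Fourier transform on $t$ qubits is charged to Lemma~\ref{lem:small-fourier}; I assume that lemma's construction does not itself use generalized Toffoli gates, so there is no circularity. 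Its cost is $\mathrm{poly}(m)$, which I keep negligible by applying the step only where $2^t$ is small compared with the block width, or by further splitting the counter into pieces.

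\textbf{Two-level grouping for the width.} Applying the counter step directly to all $k$ bits costs $k\log k$, which is too much. Instead I first split the $k$ inputs into groups of size $g=\lceil\log_2(k+3)\rceil$ and run the counter step on each group in parallel, at cost $O(g\log g)$ per group. Each group's counter has $O(\log g)$ bits, and that group's OR is the OR of its counter bits. So the $k$ inputs are replaced by $O((k/g)\log g)$ bits in total, using $O(k\log\log k)$ qubits. I then run the counter step once more on these $k'=O(k\log\log k/\log k)$ bits, which costs $O(k'\log k')=O(k\log\log k)$ and leaves $O(\log k)$ bits. Iterating a constant number of further times brings the input down to a constant number of bits, which an ordinary circuit handles in constant depth. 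The overall OR then flips the target by a constant-size Toffoli. Finally I run all reduction layers in reverse, so every counter returns to $\ket0$ and the circuit is clean. Its depth is a constant multiple of that of a single counter step.

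\textbf{Main obstacle.} The delicate part is keeping every level, and especially the small exact Fourier transforms on the counters, within constant depth and inside the $O(k\log\log(k+3))$ budget. Lemma~\ref{lem:small-fourier} is polynomial in $2^t$, so I would need to check carefully that the first-level counters (with $2^t=O(\log k)$) are cheap enough. If they are not, I would replace the Fourier counter by computing the weight modulo small primes or modulo $2$ bit by bit. Each such parity is constant depth with fanout, and I would combine the residues through the same recursive reduction.
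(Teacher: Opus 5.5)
Your outline has the right overall shape: it is the H{\o}yer--\v{S}palek weight reduction with the block grouping of Takahashi and Tani's refinement, which the paper simply cites (Sec.~3.2, whose size bound also bounds the workspace). Two steps fail as written, however.

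\textbf{Decoding cost.} Exactness forces $2^t\ge m+1$, since otherwise a nonzero weight $|x|=2^t$ wraps to the zero counter. Lemma~\ref{lem:small-fourier} therefore charges the inverse Fourier transform on each counter $O(2^{4t})$ qubits, which is of order $m^4$. That is never small compared with the block. Your counts $O(g\log g)$ and $O(k'\log k')$ omit this charge. With it, the first level costs order $k\log^3k$ and the second order $k'^4$, far above $O(k\log\log k)$. That lemma also rests on Takahashi--Tani's exact phase decoding, from the same paper whose OR you are re-deriving.

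The remedy is not to decode at all. After your controlled phases the counter is the product state $\bigotimes_j(\ket0+e^{2\pi i|x|2^j/2^t}\ket1)/\sqrt2$. Apply $H^{\otimes t}$ instead of $\Fou_{2^t}^\dagger$. If $|x|=0$ you get $\ket{0^t}$. If $1\le|x|<2^t$, let $2^s$ be the largest power of two dividing $|x|$. Then qubit $j=t-1-s$ carries phase $-1$ and becomes exactly $\ket1$, so the counter is orthogonal to $\ket{0^t}$. Computing, flipping the target on ``counter $\ne0^t$'', and uncomputing then writes $\mathrm{OR}(x)$ exactly on $O(mt)$ qubits. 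This orthogonality property passes through successive levels by linearity.

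\textbf{Termination.} A constant number $c$ of logarithmic reductions leaves $O(\log^{(c)}k)$ bits, which is not $O(1)$. Reaching constant size takes $\Theta(\log^*k)$ levels, so your scheme has depth $O(\log^*k)$. That is exactly H{\o}yer and \v{S}palek's bound, the barrier Takahashi and Tani removed. The missing ingredient is an exact constant-depth OR of $\ell$ bits on $2^{O(\ell)}$ qubits, used once $\ell$ is small.

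For example, $\sum_{\emptyset\ne S\subseteq[\ell]}\bigoplus_{i\in S}y_i=2^{\ell-1}\mathrm{OR}(y)$. So, with the target $b$ conjugated by Hadamards, $(-1)^{b\,\mathrm{OR}(y)}$ is a product of commuting controlled phases $e^{i\pi/2^{\ell-1}}$. Each acts between a fanned-out copy of $b$ and one of the $2^\ell-1$ parities, each parity computed by fanout. This takes constant depth and $O(\ell2^\ell)$ qubits.

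Apply it after three reduction levels, $k\to O(k\log\log k/\log k)\to O(\log k)\to O(\log\log k)$. Its cost is then $O(\log k\log\log k)$, and the total width is $O(k\log\log(k+3))$, as required.
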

	Takahashi and Tani's refinement of their exact OR construction has
	constant depth and size $O(k\log\log(k+3))$
	\cite[Sec.~3.2]{takahashi2016}. Their size convention sums all gate
	arities and hence bounds the workspace. Negating controls
	and output gives AND; compute--copy--uncompute gives the clean XOR
	into an arbitrary target with the same resource orders.
	
	\begin{proof}[Proof of Lemma~\ref{lem:stateprep}]
		Put $N=2^q$ and write the target as
		$\sum_{j<N}\beta_je^{i\varphi_j}\ket j$, with $\beta_j\ge0$.
		The first-one construction of Ref.~\cite[Theorem~4.2]{ssw2026}
		uses the predicates
		$g_k(x)=[x_0=\cdots=x_{k-1}=0]$ for $1\le k<N$.
		They can all be computed cleanly in constant depth on $O(qN)$
		qubits with generalized Toffoli gates. Write $k=\sum_{\ell=0}^{q-1}k_\ell2^\ell$.
		For $0\le\ell<q$ and $0\le a<2^{q-\ell-1}$, the dyadic blocks
		$B_{\ell,a}=\{a2^{\ell+1},\ldots,a2^{\ell+1}+2^\ell-1\}$ satisfy
		\begin{equation}
			\{0,\ldots,k-1\}
			=\bigcup_{\ell:k_\ell=1}B_{\ell,\lfloor k/2^{\ell+1}\rfloor}.
			\label{eq:dyadic-prefix}
		\end{equation}
		Each position belongs to at most $q$ blocks, and $B_{\ell,a}$ is used
		by $2^\ell$ prefixes. Copy the input bits, compute all block-zero
		indicators, copy those indicators to their prefixes, and compute
		each $g_k$ as a conjunction of at most $q$ indicators. Reversing
		the intermediate computations leaves only the input and flags.
		Both copying stages use $O(qN)$ qubits.
		
		Let $J$ be the last index with $\beta_J>0$,
		$t_j=\sum_{k\ge j}\beta_k^2$, and choose
		$\sin\theta_j=\beta_j/\sqrt{t_j}$ for $j\le J$, with
		$0\le\theta_j\le\pi/2$ and $\theta_j=0$ otherwise. Since
		$\sin\theta_J=1$, the all-zero branch vanishes. The product rotations give
		\begin{align*}
			\bigotimes_{j<N}(\cos\theta_j\ket0+\sin\theta_j\ket1)
			&=\sum_{j\le J}\beta_j\ket{0^j1}
			\bigotimes_{k>j}(\cos\theta_k\ket0+\sin\theta_k\ket1),\\
			\sin\theta_j\prod_{k<j}\cos\theta_k
			&=\frac{\beta_j}{\sqrt{t_j}}
			\prod_{k<j}\sqrt{\frac{t_{k+1}}{t_k}}=\beta_j.
		\end{align*}
		Compute the prefix flags. On the summand with first one at $j$,
		$g_k=0$ exactly when $k>j$, so controlled inverse rotations return
		all later qubits to zero. Changing bits after the first one leaves every prefix predicate
		unchanged, so the same prefix circuit clears the flags. Applying phase $e^{i\varphi_j}$ to qubit
		$j$ gives the desired one-hot state. Parities write its binary
		label into a fresh $q$-qubit output, and equality tests on copies
		of that label clear the one-hot register. All temporary copies are
		uncomputed before their source registers change.
		
		The depth is constant and the width is $O(qN)$. A layer whose
		generalized Toffolis have $W=O(qN)$ controls in total expands, by
		Lemma~\ref{lem:toffoli}, to $O(W\log(W+1))=O(q^2N)$ qubits.
		The added compilation work is clean and reusable, proving both
		bounds.
	\end{proof}
	
	\section{Coherent selection and the depth--width trade-off}
	\label{app:tradeoff}
	
	We use binary-to-one-hot conversion as in
	Refs.~\cite[Lemma~2.16]{ssw2026} and~\cite[Algorithm~2]{nehoran2026},
	splitting the label to reduce the workspace. An $r$-bit label has a clean encoding
	\begin{equation}
		\ket c\ket{0^{2^r}}\longmapsto\ket{0^r}\ket{e_c}
		\label{eq:onehot-encoding}
	\end{equation}
	by a constant-depth fanout circuit of width $O(2^r)$.
	Here $e_c$ is the bit string with a single $1$ at position $c$.
	Split $c=(u,v)$ into two halves, compute all equality indicators
	for each half, and erase the original bits by parities of those
	indicators. Products of the half indicators give every $[c=c']$.
	The half indicators are then erased by summing the appropriate full
	indicators. Temporary copies are uncomputed before their sources
	change. The generalized-Toffoli compilation for the half tests costs
	$O(r\log(r+1)2^{\lceil r/2\rceil})=O(2^r)$ qubits, and combining and
	clearing the indicators costs $O(2^r)$. The case $r=0$ is immediate.
	
	\begin{lemma}[Parallel coherent selection]
		\label{lem:parallel-selection}
		Let $C_c$ be specified coherent circuits, each on $k$ data qubits
		with its own work register, with depths $h_c$, widths $w_c$, and
		blocks $B_c$. With one- and two-qubit gates, fanout, and generalized Toffoli,
		there is a circuit $S$ of depth $O(1+\max_c h_c)$ and width
		$O(\sum_c w_c+(k+1)2^r)$ such that
		\begin{equation}
			J^\dagger S J=\sum_{c\in\{0,1\}^r}\ket c\bra c\otimes B_c.
			\label{eq:selected-block}
		\end{equation}
		Here $J$ appends zero-initialized work qubits to the selector and data registers.
		The circuits $C_c$ may leave nonzero work.
	\end{lemma}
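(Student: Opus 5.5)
The plan is to route the data to the selected circuit through a one-hot copy of the selector, run every $C_c$ in parallel, and then reverse the routing. First convert the selector $\ket c$ into one-hot form. Fanout copies $c$ so the original is preserved, and the clean encoding of Eq.~\eqref{eq:onehot-encoding} is applied to the copy, giving an indicator register $\ket{e_c}$ of $2^r$ qubits. This costs constant depth and width $O(2^r)$. Each indicator bit $z_{c'}=[c=c']$ is then fanned out to $k+1$ copies, so that every data qubit and the controlled circuit for $c'$ has its own control line. This adds width $O((k+1)2^r)$.

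Routing is done with controlled swaps. For each $c'$, controlled on its copies of $z_{c'}$, swap the $k$ data qubits with a fresh $k$-qubit input register $D_{c'}$ attached to $C_{c'}$. A naive implementation would swap the single data register into all $2^r$ slots, which is not parallel. I would avoid this by first fanning the data out into a binary-tree-free broadcast with CNOTs. That step is not coherent for superpositions, however, so I will instead use the one-hot structure directly. Because exactly one indicator is $1$, the map $\ket x_{\rm data}\ket0_{D_{c'}}\mapsto$ (data moved into $D_c$) is a permutation of basis states. For each bit position $i$ and each $c'$, set $D_{c',i}\mathrel{\oplus}=z_{c'}\wedge x_i$ via Toffolis, after fanning $x_i$ out to $2^r$ copies with fanout. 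Then clear $x_i$ by XORing in the parity $\bigoplus_{c'}D_{c',i}$. This equals $x_i$ because exactly one slot is loaded, and it is a fanout-style parity computed in constant depth by a fanout into $x_i$ from each $D_{c',i}$, equivalently a fanout circuit computing parity. The fanned copies must be uncomputed before $x_i$ is erased. This is the step I expect to be the main obstacle: keeping every copy clean and ordering the compute and uncompute steps so that no source changes while copies exist. It uses width $O(k2^r)$ and constant depth.

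Next, run every $C_{c'}$ in parallel on $D_{c'}$ together with its own work register, without any control. This is possible because each block is defined with work fixed to $\ket0$, and unselected slots hold $\ket{0^k}$. Finally, reverse the routing so that $D_c$ returns to the data register, and unencode the one-hot register. The block then follows by projecting onto zero work. In branch $c$, the unselected circuits $C_{c'}$ act on $\ket{0^k}$, and their all-zero output amplitude is not $1$ in general. That would multiply by $\prod_{c'\ne c}\bra0C_{c'}\ket0$, a nontrivial factor.

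To fix this, I would make the run controlled, executing $C_{c'}$ controlled on $z_{c'}$ by the controlled-circuit construction noted in Section~\ref{sec:repair}. This costs depth $O(h_{c'}+1)$ and width $O(w_{c'})$. Unselected slots then remain exactly zero, and the selected slot contributes $\bra0C_c\ket0=B_c$ on data. Under the inverse routing, nonzero residual work in $D_{c'}$ or in the work registers maps to states orthogonal to $J$. Projection therefore yields $\sum_c\ket c\bra c\otimes B_c$, which is Eq.~\eqref{eq:selected-block}. The total depth is $O(1+\max_ch_c)$ and the width is $O(\sum_cw_c+(k+1)2^r)$.
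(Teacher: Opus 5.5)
Your final construction is correct and is essentially the paper's proof: one-hot indicators computed from a copy of the selector, Toffoli routing of each data bit into its slot with parity erasure after the fanned copies are uncomputed, each $C_c$ run in parallel controlled on its own indicator, and the routing reversed. You also correctly spotted that running the $C_c$ uncontrolled would introduce the factor $\prod_{c'\neq c}\bra0C_{c'}\ket0$; your aside that a CNOT broadcast of the data is ``not coherent for superpositions'' is mistaken but harmless, since you then use exactly such computational-basis copies.
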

	
	\begin{proof}
		Compute $o_{c'}=[c=c']$ while retaining the binary selector, using
		Eq.~\eqref{eq:onehot-encoding} on a computational-basis copy of it.
		For each data bit $x_j$, parallel Toffoli gates write $o_cx_j$ into
		bit $j$ of a separate data register $\reg S_c$. Uncompute the
		temporary copies, then erase $x_j$ by the parity of bit $j$ across
		these registers. This routes the data into the selected register
		and leaves the original data register zero.
		Run each $C_c$ controlled by its own indicator, in parallel, then
		reverse the routing and indicator computation. All unselected
		circuits act as identity. The inverse routing only requires the other
		data registers to be zero; the selected register can be entangled with its
		work and an arbitrary reference. Projecting all circuit work onto
		zero therefore gives Eq.~\eqref{eq:selected-block} by linearity.
		The controlled-circuit construction of Sec.~\ref{sec:repair} costs
		$O(h_c+1)$ depth and $O(w_c)$ width, and routing costs
		$O((k+1)2^r)$ qubits at constant depth.
	\end{proof}
	
	\begin{lemma}[Exact multiplexers]
		\label{lem:mux}
		For $W_c\in\mathrm U(2^k)$, the multiplexer
		\begin{equation}
			M=\sum_{c\in\{0,1\}^r}\ket c\bra c\otimes W_c
			\label{eq:multiplexer}
		\end{equation}
		has an exact clean constant-depth fanout circuit on
		$O((k+1)^2 2^r4^k)$ qubits.
	\end{lemma}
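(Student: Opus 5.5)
Our plan is to build $M$ by combining parallel coherent selection (Lemma~\ref{lem:parallel-selection}) with the residual-repair mechanism of Theorem~\ref{thm:repair}. We apply repair to the whole multiplexer at once, not to each component separately. Then a single exact amplification step serves every selector value simultaneously.

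\emph{Step 1: exact clean components.} For each $c$, we first obtain a clean constant-depth circuit for $W_c$ on $k$ qubits. The coherent part of Theorem~\ref{thm:main} already gives one. It uses Lemma~\ref{lem:seed} with $\varepsilon=2^{-k}$, the residual-state preparation of Lemma~\ref{lem:stateprep} on $2k$ qubits, and Theorem~\ref{thm:repair}. Measured in fanout gates, its width is $O(k^2 4^k)$. The dominant term is the fanout compilation of the $2k$-qubit residual state, which by Lemma~\ref{lem:stateprep} costs $O((2k)^2 4^k)$. The approximation itself needs only $O(4^k\log^2(k+2))$ qubits. Small $k$, including $k=0$ (phases), we handle directly; this explains the $(k+1)^2$ factor.

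\emph{Step 2: select in parallel.} We feed these $2^r$ circuits $C_c$ into Lemma~\ref{lem:parallel-selection}. Because each $C_c$ is exact and clean, its block is $B_c=W_c$. Equation~\eqref{eq:selected-block} then gives block $M$, and we must additionally show the construction is clean. For a clean $C_c$, the selected register returns with zero work. The inverse routing then restores all routing registers, so the work projector acts trivially on the output. Hence $S J=J M$ exactly, not just in the projected sense. The width is $\sum_c w_c+(k+1)2^r=O((k+1)^2 2^r 4^k)$ and the depth is constant. Finally, we compile the generalized Toffoli gates used for routing and control into fanout via Lemma~\ref{lem:toffoli}. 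This costs an extra factor $\log\log$ only on the routing part, whose width is $O((k+1)2^r\log\log)$, well within the bound. The control gadgets of Section~\ref{sec:repair} inside $C_c$ are already counted in $w_c$.

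\emph{Main obstacle.} The delicate point is the claimed width $O((k+1)^2 4^k)$ per component. Theorem~\ref{thm:main} states $O(k4^k\log(k+2))$ after Toffoli compilation. That is stronger than what we need, so we can simply use it, provided the controlled versions required in Lemma~\ref{lem:parallel-selection} stay within constant factors. Section~\ref{sec:repair} asserts exactly this: control adds $O(w)$ clean work. The remaining check is that controlled fanouts introduce generalized Toffoli conjunctions of only two bits, so their compilation costs $O(1)$ each. The total is therefore $O(2^r\cdot k4^k\log(k+2)+(k+1)2^r)\subseteq O((k+1)^2 2^r 4^k)$.
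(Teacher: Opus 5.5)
\textbf{Verdict.} Your Steps 1 and 2 are correct and follow the paper's route in substance, but your opening paragraph describes a different plan that the steps never carry out, and your route carries a phase condition you should state.

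\textbf{Agreement with the paper.} Like the paper, you route the data into per-component registers with the one-hot indicator. You then run an exact clean circuit for each component in parallel on disjoint registers, and reverse the routing. Your argument that cleanliness of each $C_c$ upgrades the block identity of Lemma~\ref{lem:parallel-selection} to $SJ=JM$ is right.

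\textbf{The opening paragraph.} It announces a single repair of the whole multiplexer, which Steps 1--2 never perform. Done with Lemma~\ref{lem:stateprep} as stated, that plan would also miss the bound. The residual would be a $2(r+k)$-qubit state costing $O((r+k)4^{r+k})$ qubits, roughly a factor $2^r$ above the target. Delete that paragraph; per-component exactness is the whole proof.

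\textbf{Where the control goes.} This is the one real difference from the paper. The paper applies Theorem~\ref{thm:main} directly to the $(k+1)$-qubit unitary $\ket0\bra0\otimes I+\ket1\bra1\otimes W_c$, acting on the indicator $o_c$ and the register $\reg S_c$. Each such circuit costs $O((k+1)4^{k+1}\log(k+3))\subseteq O((k+1)^24^k)$, and all of them run unconditionally. Any global phase $e^{i\phi_c}$ of a component circuit therefore stays global, and $\prod_c e^{i\phi_c}$ does not depend on the selector.

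You instead control a $k$-qubit circuit $C_c$ gate by gate. This turns a global phase of $C_c$ into a relative phase between selector values, which would give $\sum_c e^{i\phi_c}\ket c\bra c\otimes W_c$ instead of $M$. Your route therefore needs $C_c$ to implement $W_c$ as an exact matrix, not merely up to phase. That holds under the paper's definition of a clean circuit, and the known one-qubit phase mentioned in Sec.~\ref{sec:repair} enforces it in any case. You should say so explicitly.

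\textbf{Trade-off.} Your route reuses Lemma~\ref{lem:parallel-selection} directly, with components on $k$ rather than $k+1$ qubits. The paper's route avoids both the generic control construction and this phase bookkeeping.

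\textbf{Minor point.} The $O(k^24^k)$ estimate in Step 1 ignores compiling the repair's all-work reflection via Lemma~\ref{lem:toffoli}, which would add a $\log k$ factor. Your final count uses the $O(k4^k\log(k+2))$ bound of Theorem~\ref{thm:main}, so the stated width still holds.
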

	
	\begin{proof}
		Use the routing above and implement each $(k+1)$-qubit controlled
		$W_c$ by Theorem~\ref{thm:main}, each with its own clean work
		register. These circuits act on disjoint registers, so their depth
		is constant and their
		total width is $O((k+1)^2 2^r4^k)$. Reversing the routing completes
		the multiplexer. Any global phases of these complete circuits
		multiply to a single phase independent of the selector.
	\end{proof}
	
	\begin{proof}[Proof of Theorem~\ref{thm:tradeoff}]
		Apply the cosine--sine decomposition recursively through the first
		$r$ qubits~\cite{paigewei1994,mottonen2004,shende2006}. With $k=n-r$,
		it gives
		\begin{equation}
			U=M_1R_1M_2R_2\cdots R_{2^r-1}M_{2^r},
			\label{eq:csd}
		\end{equation}
		where each $M_i$ is an $r$-controlled multiplexer on $k$ qubits.
		Each $R_i$ rotates one of the first $r$ qubits, conditioned on the
		others, and can be viewed as an $(r-1)$-controlled multiplexer on
		$k+1$ qubits. Lemma~\ref{lem:mux} implements every factor at constant
		depth with width $O((k+1)^2 4^n/2^r)$. Reuse the work between the
		$2^{r+1}-1$ factors and take $r=\lfloor\log_2h\rfloor$.
		Replacing fanout after completing the coherent circuit gives the
		adaptive bound.
	\end{proof}
	
	At smaller widths, the ordinary-circuit bound
	$O(nd+d^2/w)$ of Sun et al.~\cite{sun2023} fills the remaining range:
	it is optimal for $n\le w=O(d/n)$ and within a factor $O(n)$ of the
	counting bound for $d/n\lesssim w\lesssim d$. Thus the adaptive
	depth--width frontier is determined up to factors polynomial in
	$n=\log_2d$, including this transition interval.
	
	\begin{corollary}[Logarithmic blocks]
		\label{cor:log}
		A constant number of layers of disjoint $O(\log N)$-qubit unitaries on $N$ qubits
		has an exact clean constant-depth fanout circuit on
		$\operatorname{poly}(N)$ qubits, and an adaptive implementation with
		the same resource order. This includes depth-$O(\log N)$
		nearest-neighbor circuits on a line of $N$ qubits.
	\end{corollary}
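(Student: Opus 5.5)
The plan is to treat each layer separately and implement it with Lemma~\ref{lem:mux}, or more simply with Theorem~\ref{thm:main} applied block by block. Write the layers as $L_1,\dots,L_t$ with $t=O(1)$. Each $L_i$ is a tensor product of unitaries $W_{i,1},\dots,W_{i,m_i}$ acting on disjoint sets of at most $k=O(\log N)$ qubits, with $m_i\le N$. For each block, Theorem~\ref{thm:main} (coherent part) gives an exact clean constant-depth fanout circuit on $O(k\,4^k\log(k+2))$ qubits. Since $4^k=N^{O(1)}$, this width is $\operatorname{poly}(N)$. We give each block its own clean work register, so all blocks of a layer run in parallel and the layer has constant depth. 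Its width is $O(N\cdot k4^k\log(k+2))=\operatorname{poly}(N)$. Each circuit is clean, so every work register returns to $\ket0$ and can be reused by the next layer. Composing the $t$ layers then gives constant depth and $\operatorname{poly}(N)$ width. Global phases of the block circuits multiply to a single global phase, which we absorb or fix with a known one-qubit phase, as in Sec.~\ref{sec:repair}.

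For the adaptive statement, we first complete the coherent fanout circuit. We then substitute the measurement-based fanout gadget using Eq.~\eqref{eq:measurement-compiler}. This multiplies depth by at most $7$ and width by at most $3$, and it uses $O(1)$ feedforward rounds, so the resource order is unchanged.

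For nearest-neighbor circuits of depth $D\le c\log N$ on a line, we partition the line into consecutive intervals of length $\ell=2D+2$ (or any $\ell\ge 2D$), and also use the same partition shifted by $\ell/2$. The backward light cone of a depth-$D$ nearest-neighbor circuit spreads by at most $D$ sites. Because of this, the standard brickwork splitting should apply. Within the first time block of $D$ layers we keep only the gates inside the causal "trapezoids" based on the first partition. Equivalently, we regroup the circuit into two layers of interval unitaries.

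Concretely, we split the time evolution into two halves of depth $D/2$ each, or alternatively use a light-cone decomposition. The first layer consists of unitaries, one per interval of the first partition, each containing all gates whose light-cone "downward triangle" fits in that interval. The second layer, on the shifted partition, contains the remaining gates, which lie in "upward triangles" straddling the boundaries. Every gate lies in exactly one such triangle, and the triangles of one layer are disjoint. Since triangles of height $D$ have base at most $2D+O(1)=O(\log N)$, each group is a unitary on $O(\log N)$ qubits.

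The main care is in checking that this regrouping respects the time order. Gates in an upward triangle must come after all downward-triangle gates they depend on, and no downward-triangle gate may depend on an upward one. The standard picture of a checkerboard of diamonds and triangles makes both facts clear. Once they hold, the result is a constant number (two, or three at worst) of layers of disjoint $O(\log N)$-qubit unitaries, and the first part of the argument applies.
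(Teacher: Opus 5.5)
Your proposal is correct and follows the paper's proof: apply Theorem~\ref{thm:main} to each block in parallel and then substitute the fanout gadget. For line circuits, you split the gates into a predecessor-closed first layer of gates whose backward light cones stay inside intervals of length $O(D)$, followed by the remaining gates, which lie in disjoint $O(D)$-site windows around the interval boundaries (the paper uses cut sites every $2D+4$ sites and keeps a gate in the first layer when its time is at most both sites' distance to the nearest cut). The ordering check you defer to a picture is immediate from your definition, since a predecessor's backward cone lies inside that of its successor; the aside about splitting into two halves of depth $D/2$ should be dropped, as it does not by itself produce disjoint small blocks.
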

	
	\begin{proof}
		Apply Theorem~\ref{thm:main} to each block. For a line circuit of
		depth $D$, number the layers $1,\ldots,D$. The case $N\leq2D+4$
		is already a single block.
		Otherwise mark every $(2D+4)$th site as a cut. Gates whose time is
		at most the distance of both sites from the nearest cut form a set
		closed under predecessors, so they can be applied first. They act
		within disjoint intervals between cuts. The remaining gates lie
		within distance $D+1$ of a cut and form another layer of disjoint
		$O(D)$-site blocks. Apply the first assertion to these two layers.
	\end{proof}

	\section{Fanout by measurement and parity feedforward}
	\label{app:fanout}
	
	We give the seven-layer construction used in
	Sec.~\ref{sec:adaptive}, including its action on each measurement
	branch.
	
	Let $\reg c$ be the control, $\reg t_1,\ldots,\reg t_q$ the targets, and
	$\reg R$ a reference. An arbitrary input has the form
	$\sum_{x=0}^1\ket x_{\reg c}\ket{\phi_x}$, where the states
	$\ket{\phi_x}$ of targets and reference may be unnormalized and
	nonorthogonal. The gadget uses cat qubits $\reg a_0,\ldots,\reg a_q$
	and probes $\reg p_1,\ldots,\reg p_q$, all starting in $\ket0$.
	The first three layers of Fig.~\ref{fig:fanout} apply Hadamards to
	all cat qubits, followed by the CNOT layers
	$\reg a_{j-1}\to\reg p_j$ and $\reg a_j\to\reg p_j$ for all $j$.
	The fourth applies the CNOTs $\reg c\to\reg a_0$ and
	$\reg a_j\to\reg t_j$ in parallel. We then apply Hadamards to
	$\reg a_1,\ldots,\reg a_q$, measure all cat qubits and probes, and
	apply the corrections below. Write $s_j$ for the probe outcomes,
	$\mu$ for the outcome
	on $\reg a_0$, $o_j$ for the outcome on $\reg a_j$, and
	$b_j=s_1\oplus\cdots\oplus s_j$. The last layer applies
	\begin{equation}
		Z_{\reg c}^{\,o_1\oplus\cdots\oplus o_q}\,
		\prod_{j=1}^qX_{\reg t_j}^{\,b_j\oplus\mu}
		\label{eq:fanout-correction}
	\end{equation}
	and resets each measured qubit by $X$ raised to its recorded bit.
	
	\begin{figure}[htbp]
		\centering
		\includegraphics[width=0.6\textwidth]{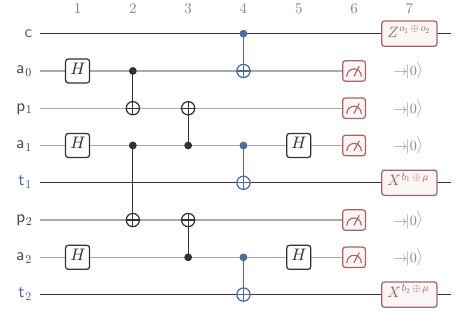}
		\caption{Seven-layer implementation of $F_2$ by measurement and parity
			feedforward, with all ancillas initialized and reset to $\ket0$.
			All measurements occur in layer~6; the corrections of
			Eq.~\eqref{eq:fanout-correction} and resets occur in layer~7.
			For $q$ targets, the construction uses $2q+1$ ancillas.}
		\label{fig:fanout}
	\end{figure}
	
	For a fixed probe record, the unnormalized cat state is
	$2^{-(q+1)/2}\sum_y\ket y\bigotimes_j\ket{y\oplus b_j}$.
	The outcome $\mu$ forces $y=x\oplus\mu$, and the final Hadamards
	contribute $2^{-q/2}(-1)^{\sum_j o_j(x\oplus\mu\oplus b_j)}$.
	After Eq.~\eqref{eq:fanout-correction}, the branch operator is therefore
	\begin{equation}
		K^{\rm corr}_{s,\mu,o}
		=2^{-(2q+1)/2}(-1)^{\sum_j o_j(\mu\oplus b_j)}F_q.
		\label{eq:fanout-branch}
	\end{equation}
	Every record has probability $2^{-(2q+1)}$, independent of the input,
	and implements $F_q$ even with an entangled reference.
	
	In a layer of width $w$, the fanout gates have disjoint supports and
	use at most $2w$ extra qubits. These are reused after reset, giving
	Eq.~\eqref{eq:measurement-compiler}.
	
	\section{An ordinary depth--width trade-off}
	\label{app:ordinary-tradeoff}
	
	The Gaussian approximate circuit can also be compiled directly for
	limited workspace.
	This improves on replacing every fanout in the constant-depth circuit
	by a tree. Throughout this section only one- and two-qubit gates are
	allowed, and width includes all data and work registers.
	
	\Needspace{10\baselineskip}
	\begin{theorem}[Ordinary depth--width trade-off]
		\label{thm:ordinary-tradeoff}
		For an absolute constant $C$, every $U\in\mathrm U(d)$, $d=2^n$,
		and every $Cdn\le w\le Cn^2d^2$ admit an exact clean circuit
		of width $O(w)$ and depth
		\begin{equation}
			O\!\left(n+\frac{d^2\log^2(n+2)}w\right).
			\label{eq:ordinary-width-bound}
		\end{equation}
		For generic targets the lower bound is $\Omega(n+d^2/w)$, so the
		upper bound is within a factor $O(\log^2(n+2))$ in this width range.
	\end{theorem}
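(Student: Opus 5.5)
We build the ordinary circuit by running the residual-repair pipeline of Theorem~\ref{thm:repair} with every constant-depth fanout or Toffoli primitive replaced by an ordinary subcircuit. Where a primitive has more parallel work than the width budget allows, we serialize it.

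\emph{Lower bound.} The lower bound needs nothing new. The causal-cone argument in the proof of Theorem~\ref{thm:ordinary} gives $h=\Omega(n)$. Specializing Proposition~\ref{prop:width} to ordinary circuits gives $(h+1)w=\Omega(d^2)$. Together these give $\Omega(n+d^2/w)$.

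\emph{Approximate circuit.} Start from the Gaussian circuit of Lemma~\ref{lem:seed} at accuracy $\varepsilon=1/d$, so $K=O(n)$ and $q=O(\log n)$. The encoder, the Fourier layers of Lemma~\ref{lem:small-fourier} and the decoder act mode by mode on $m=4d$ modes. Replacing each fanout of width $O(K^4)$ by a CNOT tree costs depth $O(\log K)=O(\log n)$. I would rather use a direct ordinary exact $\Fou_K$ circuit on $O(\mathrm{poly}(K))$ qubits of depth $O(\log K)$, if one is available, to keep the per-mode width $\mathrm{poly}(n)$ and total width $O(d\,\mathrm{poly}(n))\le O(w)$; this is where the lower limit $w\ge Cdn$ should come from, possibly after tightening. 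The main cost is the three quadratic phase layers $D_S$. By Eq.~\eqref{eq:bitwise-quadratic}, each is a product of $O(m^2q^2)=O(d^2\log^2(n+2))$ commuting controlled-phase gates on $mq$ bits. With $w$ qubits, I would make $\Theta(w/(mq))$ copies of the $mq$ bit register by CNOT trees, in depth $O(\log(w/(mq)))=O(n)$. Then I partition the interaction graph into rounds of disjoint pairs: with $R$ copies, each round applies about $Rmq/2$ gates, so $O(m^2q^2/(Rmq))=O(d^2\log^2(n+2)/w)$ rounds suffice, up to an edge-coloring argument. Finally I uncopy. This produces the second term in Eq.~\eqref{eq:ordinary-width-bound}.

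\emph{Residual state and repair.} The residual state lives on $2n$ qubits. I would prepare it with the ordinary version of Lemma~\ref{lem:stateprep}, or with a known ordinary state-preparation trade-off (Sun et al.-type) of depth $O(n+d^2/w)$ at width $w$. For $w\ge Cdn$, such a construction should fit within the bound, possibly after compiling generalized Toffolis into logarithmic-depth trees. Theorem~\ref{thm:repair} then needs three controlled calls to each of $C$ and $C_E$, plus the reflections. Control fanout and the reflection $I-2\Pi$, a Toffoli on $O(w)$ qubits, become depth-$O(\log w)=O(n)$ trees. The total depth is therefore a constant times the sum of the above.

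\emph{Main obstacle.} The hard step is scheduling the phase layers with only $O(w)$ qubits so that the depth really is $O(d^2\log^2(n+2)/w)$, and not worse by another $\log$ factor. I would first treat this as edge-coloring the complete multigraph on the $mq$ bit-copies, and use a round-robin schedule across the $R$ register copies. A secondary worry is the upper range $w\le Cn^2d^2$: above it, copying costs dominate. There I would simply cap the workspace and accept depth $O(n)$.
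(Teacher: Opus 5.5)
\textbf{Gap: the encoder does not fit at the bottom of the width range.} Your route matches the paper's almost step for step. You compile the Gaussian approximation at $\varepsilon=1/d$ with ordinary gates and make $\Theta(w/(mq))$ copies of the $mq$ phase bits. You then edge-color the phase gates; assigning the $M-1$ perfect matchings of $K_M$ round-robin to the copies is a clean variant of Lemma~\ref{lem:ordinary-quadratic}. The residual comes from a Yuan--Zhang/Sun et al.\ state-preparation trade-off in depth $O(n+d^2/w)$, and controls and reflections cost $O(\log w)$. The lower bound is argued the same way.

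The problem is the width of the per-mode stages. As written, you either reuse Lemma~\ref{lem:small-fourier}, which needs $O(K^4)$ qubits per mode, or you hope for some $\mathrm{poly}(K)$-width, $O(\log K)$-depth Fourier circuit. Either choice gives encoder width $d\,\mathrm{poly}(n)$. That exceeds $O(w)$ whenever $Cdn\le w\ll d\,\mathrm{poly}(n)$, so the theorem is not established on that part of its range. Ordinary versions of the other fanout-based encoder pieces have the same defect:
\begin{itemize}
\item Lemma~\ref{lem:stateprep} for $u_0$ and the Householder exchange costs $O(q^2K)$ qubits per mode.
\item The truth-table modular negation costs $O(Kq\log q)$ qubits per mode.
\end{itemize}
Each pushes the total above $dn$ by polylogarithmic factors.

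\textbf{The missing observation.} The target depth already contains an additive $O(n)$, and $K=\Theta(n)$, $q=O(\log n)$. So no per-mode operation needs constant or $O(\log K)$ depth. The paper uses narrow, deeper standard circuits:
\begin{itemize}
\item the textbook exact $q$-qubit Fourier transform, with width $O(q)$ and depth $O(q^2)$;
\item Yuan--Zhang preparation of the $q$-qubit mode states with $\Theta(K)$ ancillas, in depth $O(q)$;
\item a controlled Householder reflection via Lemma~\ref{lem:ordinary-control};
\item two's complement plus a ripple-carry increment for $k\mapsto -k$;
\item tree-based binary/one-hot conversion of width $O(m\log m)$.
\end{itemize}
The encoder then has width $O(mK+m\log m)=O(dn)$ and depth $O(n)$. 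This is exactly where the threshold $w\ge Cdn$ comes from.

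\textbf{A smaller point on controlled calls.} Distribute the control once to a separate copy per wire and keep those copies for the entire controlled circuit, as in Lemma~\ref{lem:ordinary-control}. If you instead replace the per-layer fanout of Section~\ref{sec:repair} by a tree at every layer, the $O(d^2\log^2(n+2)/w)$ term gets multiplied by $\log w$ rather than paying an additive $O(\log w)$.
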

	Larger widths use the construction at the upper endpoint, with the
	additional qubits idle.
	
	\begin{lemma}[Controlling an ordinary circuit]
		\label{lem:ordinary-control}
		A depth-$t$, width-$W$ ordinary circuit can be controlled by one qubit
		in depth $O(t+\log(W+2))$ and width $O(W)$, with clean additional
		work. A reflection about the zero state of any specified set of its
		wires has depth $O(\log(W+2))$ and width $O(W)$.
	\end{lemma}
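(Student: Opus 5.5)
To control a depth-$t$, width-$W$ ordinary circuit, I would distribute the control qubit into $O(W)$ clean copies with a CNOT tree, use the copies to control all gates in parallel, and then uncompute the tree.

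First, I would prepare $W$ fresh ancillas in $\ket0$. A binary tree of CNOTs copies the control bit $x$ into all of them in depth $O(\log(W+2))$. This is exact on superpositions, $\ket x\ket{0^W}\mapsto\ket x\ket{x^W}$. Each layer of the circuit has at most $W$ gates on disjoint supports, so each gate in a layer gets its own copy. A controlled two-qubit gate is a three-qubit gate. By the standard decomposition into one- and two-qubit gates \cite{barenco1995}, it has constant depth and touches only the control copy and its two wires. Gates within a layer use distinct copies and act on disjoint wires, so the controlled layer has depth $O(1)$. After the last layer, I reverse the CNOT tree. The copies still equal $x$, since controlled gates never change their controls, so they return to $\ket0$. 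The total depth is $O(t+\log(W+2))$ and the width is $O(W)$.

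For the reflection $I-2\ket0\!\bra0$ on a specified set $T$ of $k\le W$ wires, I would compute the OR of those wires into a fresh ancilla with a binary tree of clean two-qubit ORs. Each OR gate is a Toffoli conjugated by $X$ gates and writes its result into a new ancilla, so the tree uses $O(k)$ ancillas and depth $O(\log(k+2))$. Next I apply $Z$ to the root conjugated by $X$, which gives phase $-1$ exactly when the OR is $0$, i.e. when all wires in $T$ are zero. Finally, running the tree in reverse cleans the ancillas. This gives depth $O(\log(W+2))$ and width $O(W)$, and the work is clean because every intermediate ancilla is a deterministic Boolean function of the untouched wires.

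The main obstacle is cleanliness for inputs entangled with a reference when the controlled circuit has nonzero work. The fanned-out copies are computational-basis functions of the control alone, so uncomputing them is valid regardless of what the controlled gates did. I would also state that "controlled" means applying $C$ on the whole width-$W$ register, including its own work, so the lemma requires no assumption that the work of $C$ returns to zero.
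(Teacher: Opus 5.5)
Your proposal is correct and follows the paper's proof. Both arguments fan the control out through a balanced CNOT tree so that gates within a layer have distinct controls, use constant-size controlled-gate decompositions, and clear the copies with the inverse tree. For the reflection, the paper computes the conjunction of the negated bits with a logarithmic-depth Toffoli tree, flips its sign with $Z$, and uncomputes; your OR tree with an $X$-conjugated $Z$ at the root is the same construction.
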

	\begin{proof}
		A balanced CNOT tree distributes the external control to one separate
		copy per physical wire. Each gate uses a designated copy from its
		support, so disjoint gates use distinct controls. Controlled local
		gates have constant-size exact decompositions~\cite{barenco1995}.
		The copies remain controls throughout all $t$ layers and are cleared
		by one inverse tree at the end. For a reflection, a balanced Toffoli
		tree computes the conjunction of the negated tested bits, a $Z$
		changes its sign, and the inverse tree clears the added work.
		This works on arbitrary states of the tested wires.
	\end{proof}
	
	\begin{lemma}[Quadratic phases with limited workspace]
		\label{lem:ordinary-quadratic}
		A diagonal phase that is a quadratic function of $M$ bits has an
		exact clean implementation of width $O(W)$ and depth
		\begin{equation}
			O\!\left(\log(M+2)+1+\frac{M^2}{W}\right),\qquad W\ge M.
			\label{eq:quadratic-schedule}
		\end{equation}
	\end{lemma}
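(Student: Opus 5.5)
The phase is $\exp(i\sum_a\alpha_a x_a+i\sum_{a<b}\beta_{ab}x_ax_b)$ on bits $x_1,\ldots,x_M$. The linear part costs one layer of single-qubit phase gates. The real work is the $\binom M2$ controlled-phase gates $\mathrm{CP}(\beta_{ab})$, which commute. The plan is to copy the input bits so that many disjoint pairs can be acted on in parallel, then uncompute the copies. Set $R=\max(1,\lfloor W/M\rfloor)$, so $R$ copies of every bit fit in width $O(RM)=O(W)$. With only one- and two-qubit gates, a balanced CNOT tree makes $R$ copies of each $x_a$ in depth $O(\log(R+1))=O(\log(M+2)+\log(W/M+1))$. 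Since every gate considered later has $W\le M^2$ (otherwise we cap $R$ at $M$), this is $O(\log(M+2))$. All $M$ trees act on disjoint wires and run in parallel. The copies hold only computational-basis values of the data, and the tree is a permutation, so reversing it at the end clears all work exactly on arbitrary superpositions. Cleanliness is therefore immediate.

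Next comes scheduling. Write $K_M$ for the complete graph on the $M$ bits, viewed as a multigraph whose edges are the controlled phases. A proper edge coloring of $K_M$ with $M$ (or $M-1$) colors splits the pairs into $O(M)$ matchings, each of size at most $M/2$. A matching can be applied in one layer using the original wires. With $R$ copies, we take $R$ matchings at once. Assign matching number $\ell$ within the group to copy index $\ell$ of every bit. Then the pairs in one group use disjoint wires: two pairs from different matchings use different copy indices, and two pairs from the same matching are vertex-disjoint. Each group is a single layer of $\mathrm{CP}$ gates. So $O(M)$ matchings need $O(\lceil M/R\rceil)=O(1+M^2/W)$ layers. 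Because all gates are diagonal, applying them to copies gives the same phase as applying them to the originals, and their order does not matter.

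Adding the copying tree, the phase layers, and the inverse tree gives depth $O(\log(M+2)+1+M^2/W)$ and width $O(M+RM)=O(W)$. These are the bounds in Eq.~\eqref{eq:quadratic-schedule}.

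The only delicate points are capping $R$ at $M$ so that the tree depth stays $O(\log(M+2))$, and checking that the coloring assignment never reuses a wire within a layer. I expect the wire-disjointness bookkeeping to be the main thing to state carefully. Phases with constants $c_i$, such as Eq.~\eqref{eq:bitwise-quadratic}, are covered after collecting $\beta_{ab}$ and the diagonal terms $x_a^2=x_a$ into $\alpha_a$.
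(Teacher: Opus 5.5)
Your proof is correct and follows the paper's approach: $L=\min\{M,\lfloor W/M\rfloor\}$ copies per bit via parallel balanced CNOT trees, the commuting controlled phases scheduled on disjoint copies in $O(1+M/L)=O(1+M^2/W)$ layers, the trees reversed, and cleanliness following because the phase gates are diagonal. The only difference is the scheduling step: you edge-color $K_M$ first and run $L$ matchings per layer, with copy index equal to the matching's position in its group, whereas the paper first distributes each vertex's incident edges among its $L$ representatives and then greedily edge-colors the resulting graph of maximum degree $O(M/L)$; both give the same bounds.
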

	\begin{proof}
		After collecting terms, the two-bit phases are edges of a simple
		graph on $M$ vertices. Give each bit
		$L=\min\{M,\lfloor W/M\rfloor\}$ representatives, including the
		original, using parallel balanced CNOT trees. At each vertex,
		distribute its incident edges among its representatives, each
		receiving at most $\lceil(M-1)/L\rceil$ edges. An original edge now
		joins its assigned representative at each endpoint. The resulting
		graph has maximum degree $O(M/L)$, so greedy edge coloring gives
		$O(1+M/L)$ matchings. Apply these layers and the single-bit phases,
		then reverse the copy trees. The phases preserve computational-basis
		values, so all copies are erased exactly. The width is $O(ML)$ and
		the depth $O(\log(L+1)+1+M/L)$, as claimed.
	\end{proof}
	
	\begin{proof}[Proof of Theorem~\ref{thm:ordinary-tradeoff}]
		At approximation error $1/d$, the discretization of
		Supplementary Material~\ref{app:seed} permits
		\begin{equation}
			m=4d,\qquad K=2^q=\Theta(n),\qquad q=O(\log(n+2)),
			\label{eq:ordinary-grid}
		\end{equation}
		with an absolute, target-independent constant in $K$.
		Use the same encoder as there, with ordinary implementations of its
		components. Their costs, with all modes processed in parallel, are
		\begin{center}
			\begin{tabular}{@{}lll@{}}
				\toprule
				Operation & Width & Depth \\
				\midrule
				Binary/one-hot conversion & $O(m\log m)$ & $O(\log m)$ \\
				Mode preparation and controlled exchange & $O(mK)$ & $O(q)$ \\
				Parity readout by modular negation & $O(mq)$ & $O(q)$ \\
				Ordinary Fourier transforms & $O(mq)$ & $O(q^2)$ \\
				\bottomrule
			\end{tabular}
		\end{center}
		For the conversion in Eq.~\eqref{eq:onehot-encoding}, separate copies
		and balanced AND/parity trees give the first row.
		For the second, Yuan--Zhang state preparation on $q$ logical qubits
		with $a=\Theta(K)$ ancillas has depth
		$O(q+2^q/(q+a))=O(q)$ and clean work
		\cite[Theorem~2]{yuanzhang2023}. Lemma~\ref{lem:ordinary-control}
		implements the controlled Householder reflection in
		Eq.~\eqref{eq:ordinary-householder} with the same resource orders. Modular negation
		uses two's complement and a ripple-carry increment. For the last
		row, use the standard exact $q$-qubit Fourier circuit; the much wider
		constant-depth Fourier circuit is unnecessary. Thus the encoder and
		its inverse have
		\begin{equation}
			\text{width }O(mK+m\log m)=O(dn),\qquad
			\text{depth }O(n).
			\label{eq:ordinary-encoder}
		\end{equation}
		
		Each quadratic phase acts on $M=mq=O(d\log(n+2))$ bits.
		Allocate a constant fraction of the $O(w)$ workspace to its copies
		and apply Lemma~\ref{lem:ordinary-quadratic}. The remaining live
		registers fit because $w\ge Cdn$. Combining the three phases with
		the table gives a width-$O(w)$ approximate circuit of depth
		\begin{equation}
			O\!\left(n+\frac{d^2\log^2(n+2)}w\right).
			\label{eq:ordinary-seed-depth}
		\end{equation}
		Only the scheduling has changed, so its block-error guarantee is
		unchanged.
		
		The residual has $2n$ logical qubits. The same Yuan--Zhang theorem,
		now with $\Theta(w)$ ancillas, prepares it exactly and cleanly in
		width $O(w)$ and depth
		\begin{equation}
			O\!\left(n+\frac{d^2}w\right).
			\label{eq:ordinary-residual-depth}
		\end{equation}
		Apply the residual-repair circuit using
		Lemma~\ref{lem:ordinary-control} for its constant number of controls
		and reflections. Their total additional depth is $O(\log w)=O(n)$
		and their clean additional workspace is $O(w)$. This proves the
		upper bound, with the same exact block identity and cleanup as in
		Theorem~\ref{thm:repair}. The lower bound follows by combining
		parameter counting with the ordinary causal-cone bound.
	\end{proof}

	\section{Exact synthesis from constant-accuracy approximations}
	\label{app:constant-accuracy}
	
	Suppose every $T\in\mathrm U(d)$, $d=2^n$, has a specified coherent
	circuit of depth $h$ and width $w$ whose block $B_T$
	satisfies $\opnorm{B_T-T}\leq1/3$. We use one- and two-qubit gates,
	fanout, and generalized Toffoli.
	
	Every square matrix $M$ with $\opnorm M\le1$ is the average of
	two unitaries. For a full SVD
	$M=W_1\operatorname{diag}(s_i)W_2^\dagger$, $0\leq s_i\leq1$, set
	\begin{equation}
		V_\pm=W_1\operatorname{diag}\!\left(s_i\pm i\sqrt{1-s_i^2}\right)W_2^\dagger,
		\qquad M=(V_++V_-)/2.
		\label{eq:constant-two-unitaries}
	\end{equation}
	Fix $U$, let $L=\lceil\log_3(2d)\rceil$, and start with
	$B_0=B_U$, $E_1=U-B_0$. At round $j$, the matrix $3^jE_j$ has
	operator norm at most one. Decompose it as above and approximate
	its two unitaries by $B_{j,\pm}$:
	\begin{align}
		E_{j+1}&=E_j-\frac{3^{-j}}2(B_{j,+}+B_{j,-}),
		&\opnorm{E_{j+1}}&\leq3^{-(j+1)},                        \label{eq:constant-recurrence}\\
		U&=B_0+\sum_{j=1}^{L-1}\frac{3^{-j}}2(B_{j,+}+B_{j,-})+E_L,
		&\opnorm{E_L}&\leq\frac1{2d}.
		\label{eq:constant-expansion}
	\end{align}
	The resulting $2L-1$ approximate circuits are specified classically.
	
	For $E_L\neq0$, suppose the residual state in
	Eq.~\eqref{eq:residual-state}, with $E=E_L$, has an exact clean
	preparation of depth $h_E$ and width $w_E$.
	Equation~\eqref{eq:residual-block} gives the block $E_L/\nu$, and
	\begin{align}
		\nu&=\sqrt d\,\fnorm{E_L}\leq d\,\opnorm{E_L}\leq\tfrac12,
		&\lambda&=1+\sum_{j=1}^{L-1}3^{-j}+\nu\leq2.
		\label{eq:constant-normalization}
	\end{align}
	When $E_L=0$, omit the residual branch and set $\nu=h_E=w_E=0$. Prepare
	the selector with probabilities $1/\lambda$, $3^{-j}/(2\lambda)$
	for each sign at round $j$, and $\nu/\lambda$ for the residual
	branch. Pad to a power of two with zero-amplitude branches and
	identity circuits.
	Lemma~\ref{lem:parallel-selection} implements the selection in
	parallel. After unpreparing the selector, the block is $U/\lambda$.
	Rotating a zero-initialized flag by $R_y(\theta)$ with
	$\cos(\theta/2)=\lambda/2$ gives $U/2$, to which the exact identity
	\eqref{eq:cubic} applies.
	
	The selector state has $O(L)$ nonzero amplitudes and needs constant depth and
	$O(L\log(L+1))$ qubits by Lemma~\ref{lem:stateprep}. The selection
	lemma, $L=O(n)$, and $w\geq n$ therefore give
	\begin{equation}
		H=O(h+h_E+1),\qquad W=O(Lw+w_E)=O(nw+w_E).
		\label{eq:constant-resources}
	\end{equation}
	Amplification uses each specified approximate circuit twice forward and once
	backward. Compiling generalized Toffoli gates, including the
	all-work reflections, gives a fanout circuit of width
	$O(W\log(W+1))$, and the fanout gadget gives the adaptive circuit.
	If both the approximate circuits and the residual preparation use ordinary
	one- and two-qubit gates, Lemma~\ref{lem:ordinary-control} and tree
	implementations of routing and selector preparation instead give
	depth $O(h+h_E+\log(W+1))$ and width $O(W)$.

	\section{An independent proof of exact constant-depth universality}
	\label{sm:independent}
	
	The following finite-dimensional construction gives exact universality
	with width $2^{O(2^n)}$. Its synthesis ingredients are the elementary
	binary-to-one-hot conversion \eqref{eq:onehot-encoding}, the standalone
	state-preparation Lemma~\ref{lem:stateprep}, and classical reversible
	lookup. It does not invoke the approximate synthesis theorem or the
	exact multiplexer lemma.
	
	Let $N=2^D$ and index its basis by $x\in\{-1,+1\}^D$. Define the
	fixed real isometry $F_{x,j}=x_j/\sqrt N$, with $F^{\mathsf T}F=I_D$.
	For every real matrix $M\in\mathbb R^{D\times D}$ with
	$\opnorm M\le1$, we will show that
	\begin{equation}
		F^{\mathsf T}\Bigl(\sum_{k=1}^K\lambda_kP_k\Bigr)F=M/2,
		\qquad K\leq D^2+1,\quad \lambda_k\geq0,\quad\sum_k\lambda_k=1,
		\label{sm:eq:permutation-dilation}
	\end{equation}
	where the $P_k$ are permutations of the $N$ basis states.
	Let $M^{\circ r}$ denote the entrywise $r$th power of $M$, for
	$r\ge1$. For $J_r\ket j=\ket j^{\otimes r}$, the identity
	$M^{\circ r}=J_r^{\mathsf T}M^{\otimes r}J_r$ gives
	$\opnorm{M^{\circ r}}\leq1$. Following the sine transformation in
	Krivine's argument~\cite{krivine1979}, define
	\begin{equation}
		C_{ij}=\sin\!\left(\frac{\pi}{4}M_{ij}\right),\qquad
		\opnorm C\leq\sum_{a\geq0}\frac{(\pi/4)^{2a+1}}{(2a+1)!}
		=\sinh(\pi/4)<1.
		\label{sm:eq:sine-bound}
	\end{equation}
	Take centered jointly Gaussian $g,g'$ with covariance
	\[
	\operatorname{Cov}\!\begin{pmatrix}g\\g'\end{pmatrix}
	=\begin{pmatrix}I&C^{\mathsf T}\\ C&I\end{pmatrix}>0.
	\]
	Their sign vectors $X=\operatorname{sgn}g$ and $Y=\operatorname{sgn}g'$
	each have the uniform distribution on the entire
	sign cube because the coordinates within each marginal are
	independent. The Gaussian sign identity~\cite[Lemma~2.1]{friedland2018} gives
	\begin{align}
		\mathbb E[Y_iX_j]
		&=\frac2\pi\arcsin(C_{ij})=M_{ij}/2,
		&T_{y,x}&=N\Pr(Y=y,X=x),                                 \label{sm:eq:arcsine}\\
		\sum_yT_{y,x}&=\sum_xT_{y,x}=1,
		&(F^{\mathsf T}TF)_{ij}&=\mathbb E[Y_iX_j]=M_{ij}/2.
		\label{sm:eq:stochastic-block}
	\end{align}
	The arcsine takes its principal value since $|\pi M_{ij}/4|\leq\pi/4$.
	The doubly stochastic matrix $T$ is a convex combination of permutations
	by the Birkhoff--von Neumann theorem. Apply Carath\'eodory's theorem
	\emph{after compression}: the matrices $F^{\mathsf T}PF$ lie in a real
	space of dimension $D^2$, so at most $D^2+1$ are required. This proves
	Eq.~\eqref{sm:eq:permutation-dilation}. The Gaussian distribution
	enters only this existence argument.
	
	For a power-of-two $D$, encode $\ket j$ as $\ket{e_j}$ using
	Eq.~\eqref{eq:onehot-encoding}, then apply $H^{\otimes D}$. The
	amplitude at a bit string $z$ is $(-1)^{z_j}/\sqrt N$, which implements
	$F$ cleanly. Pad the coefficients to a power-of-two number $K=O(D^2)$
	and prepare $\ket\lambda=\sum_k\sqrt{\lambda_k}\ket k$ by
	Lemma~\ref{lem:stateprep}, using $O(D^2\log^2(D+1))$ qubits.
	
	The required selection is the classical bijection
	$\ket k\ket x\mapsto\ket k\ket{\pi_k(x)}$ on
	$m=D+\log_2K$ bits, with identity permutations on padded labels.
	With fanout and generalized Toffoli gates, any $m$-bit Boolean map
	has a clean constant-depth reversible oracle on $O(m2^m)$ qubits:
	compute all tests $[x=a]$ on fanout copies,
	form the output bits by parities, and uncompute. For a bijection $f$,
	the oracles for $f$ and $f^{-1}$ give
	$\ket x\ket0\mapsto\ket x\ket{f(x)}\mapsto\ket0\ket{f(x)}$;
	a swap restores the data location~\cite[footnote~3]{rosenthal2026}. Thus the selection operation is
	implemented at width $O(D^3 2^D)$ in the generalized-Toffoli model.
	
	Prepare $\ket\lambda$, apply the selected permutation, undo the state preparation,
	and conjugate this circuit by the encoder for $F$. With every work
	qubit included in the projection, the resulting unitary has block
	\begin{equation}
		(I\otimes\bra0)V_M(I\otimes\ket0)
		=F^{\mathsf T}\Bigl(\sum_k\lambda_kP_k\Bigr)F=M/2.
		\label{sm:eq:implemented-block}
	\end{equation}
	The permutations need not preserve the encoded subspace.
	For $U=U_{\mathrm R}+iU_{\mathrm I}\in\mathrm U(d)$, use $D=2d$ and
	\begin{equation}
		R_U=\begin{pmatrix}U_{\mathrm R}&-U_{\mathrm I}\\ U_{\mathrm I}&U_{\mathrm R}\end{pmatrix},\qquad
		\ket{-i}=(\ket0-i\ket1)/\sqrt2,\qquad
		R_U(\ket{-i}\otimes\ket\psi)=\ket{-i}\otimes U\ket\psi.
		\label{sm:eq:realification}
	\end{equation}
	Apply the construction to the real orthogonal $R_U$ and prepare and
	unprepare $\ket{-i}$. The block becomes $U/2$.
	The algebraic amplification identity \eqref{eq:cubic} therefore gives
	an exact clean implementation on every input and reference.
	
	There are a constant number of stages. Permutation selection dominates the width,
	and compiling generalized Toffoli gates, including all-work reflections,
	only multiplies it by $O(D)$. Hence both the fanout and adaptive
	implementations have constant depth and
	\begin{equation}
		w=2^{O(D)}=2^{O(d)}=2^{O(2^n)}.
		\label{sm:eq:independent-width}
	\end{equation}

\end{document}